\crefname{equation}{}{}
\Crefname{equation}{}{}
\newtheorem{theorem}{Theorem}
\newtheorem{example}{Example}
\newtheorem{corollary}[theorem]{Corollary}
\newtheorem{remark}{Remark}
\newtheorem{definition}{Definition}
\crefname{definition}{\textbf{definition}}{definitions}
\Crefname{definition}{Definition}{Definitions}
\crefname{assumption}{\textbf{assumption}}{assumptions}
\Crefname{assumption}{Assumption}{Assumptions}
\let\hat\widehat
\let\tilde\widetilde
\newcommand{\R}{\mathbb R}
\newcommand{\E}{\mathbb E}
\newcommand{\D}{\mathbb D}
\newcommand{\mcX}{\mathcal X}
\newcommand{\mcR}{\mathcal R}
\newcommand{\mcF}{\mathcal F}
\newcommand{\mcG}{\mathcal G}
\newcommand{\mcB}{\mathcal B}
\newcommand{\cc}{\text{\textsc{cc}}}
\newcommand{\ac}{\text{\textsc{ac}}}
\newcommand{\nc}{\text{\textsc{nc}}}
\newcommand{\assum}{A}
\renewenvironment{proof}{{\bf Proof.}}{$\Box$}
\newcommand{\blind}{1}
\begin{document}

\def\spacingset#1{\renewcommand{\baselinestretch}%
{#1}\small\normalsize} \spacingset{1}


\if1\blind { \title{\bf Nonparametric Pattern-Mixture Models for Inference with Missing Data} 
\author{Yen-Chi Chen\thanks{yenchic@uw.edu} and Mauricio Sadinle\thanks{msadinle@uw.edu}\\
University of Washington}
		\date{}
  \maketitle
} \fi

\if0\blind
{
  \title{\bf }
  \author{}
	\date{}
  \maketitle
} \fi

\begin{abstract}
Pattern-mixture models provide a transparent approach for handling missing data, where the full-data distribution is factorized in a way that explicitly shows the parts that can be estimated from observed data alone, and the parts that require identifying restrictions.  We introduce a nonparametric estimator of the full-data distribution based on the pattern-mixture model factorization.  Our approach uses the empirical observed-data distribution and augments it with a nonparametric estimator of the missing-data distributions under a given identifying restriction.  Our results apply to a large class of \emph{donor-based} identifying restrictions that encompasses commonly used ones and can handle both monotone and nonmonotone missingness.  We propose a Monte Carlo procedure to derive point estimates of functionals of interest, and the bootstrap to construct confidence intervals.
\end{abstract}

\noindent%
{\it Keywords:} Bootstrap; Missingness mechanism; Nonignorable nonresponse; Nonparametric identification; 
 Nonparametric inference.  \vfill

\newpage
\spacingset{1.45} 

\section{Introduction}

Statistical inference with missing data requires handling the joint distribution of the study variables and their missingness indicators \citep{Rubin76}.  Two natural ways of factorizing this distribution arise.  The most popular approach is the selection model factorization, where we work with the product of the marginal distribution of the study variables and the missingness mechanism.  The popularity of this approach is at least partly due to the fact that analysts are able to use the same, typically parametric, models that they would use with complete data, and that they do not have to handle the missingness mechanism under the assumption of ignorability \citep[e.g.,][Chapter 6.2]{LittleRubin02}.  Nevertheless, selection models are difficult to work with under nonignorable missing data, as ensuring that they are identifiable is difficult in general \citep[e.g.,][]{sadinle2019SAN} and it often heavily relies on parametric assumptions \citep[e.g.,][p. 107]{DanielsHogan08}.

Pattern-mixture models offer an alternative, where the full-data distribution is factorized as the  distribution of the observed data times the conditional distribution of the missing data given the observed data \citep{Little93}.  This factorization shows that the full-data distribution is not identifiable, and it clearly separates what  can and cannot be recovered from observed data alone.  Specifically, the observed-data distribution can be directly estimated  from the observed data, but the missing-data distribution is only obtainable after imposing identifying restrictions that tie the full-data distribution to the observed-data distribution.  Despite this transparent separation of what one can and cannot recover from data alone, implementations of pattern-mixture models have predominantly been done under parametric assumptions for the observed-data distribution \citep[e.g.,][]{Little93,Little94,Little95,HoganLaird97,DanielsHogan00,Fitzmaurice03,Kenwardetal03,RoyDaniels08,WangDaniels11}, although more flexible Bayesian nonparametric modeling approaches have recently been developed \citep{LineroDaniels15,Linero17,LineroDaniels18}.  Existing approaches also mainly focus on handling monotone missingness in longitudinal studies.  

In this article we propose a methodology for nonparametric inference under pattern-mixture models.  We show how inferences can be derived under a very general class of identifying restrictions that can handle both monotone and nonmonotone missingness.  Our proposed procedure guarantees that inferences only depend on the observed data and a given identifying restriction. 





{\emph{Main Contributions.}}
\begin{enumerate}
\item 
We introduce the concept of \emph{donor-based} identification in Section~\ref{sec:PMMs}, as a way of unifying different identification restrictions commonly used to handle monotone missingness.  The application of pattern-mixture models to nonmonotone nonresponse has been prevented partly due to the lack of identifying restrictions applicable to that more general scenario.  Here we also extend donor-based identification to handle nonmonotone missingness.

\item 
We propose nonparametric estimators of the full-data distribution, where the part corresponding to the observed-data distribution is estimated using its empirical version, and the missing-data distributions are estimated under identifying restrictions using a \emph{surrogate estimate} of the observed-data distribution that relies on conditional kernel-density estimators (Section~\ref{sec::est::full}).  


\item
We show how to estimate statistical functionals of interest (Section~\ref{sec::functional}), where the point estimator is obtained  numerically using a Monte Carlo approach (Section~\ref{sec::MCE}), which we prove to be the valid (Theorem~\ref{thm::MC_icin}). 

\item We introduce a bootstrap approach for constructing confidence intervals
for statistical functionals of interest (Section~\ref{sec::bootstrap}) and prove its validity (Theorem~\ref{thm::LBT}). 

\item Finally, we use \emph{Efron's bootstrap diagram} to make explicit the estimand that our method is inferring (Section~\ref{sec::BD}). 
This diagram helps clarify the validity of our procedures even under misspecified identifying assumptions. 


\end{enumerate}



\emph{Outline.}
We give a general introduction to pattern-mixture models, and present some generalizations of commonly used identifying restrictions in Section~\ref{sec:PMMs}.
We introduce our nonparametric estimator 
along with a Monte Carlo approach for parameter estimation in Section~\ref{sec::estimate}. 
In Section~\ref{sec::bootstrap},
we explain how one can use the bootstrap approach to construct valid confidence intervals.
Theoretical results, including convergence rates, asymptotic normality,
validity of Monte Carlo methods,
and the validity of bootstrap methods,
are given in Section~\ref{sec::theory}. 
We provide a data analysis in Section~\ref{sec::data}. 
Finally, we give a short discussion about future directions in Section~\ref{sec::discussion}.

\section{Pattern-Mixture Models}	\label{sec:PMMs}

\subsection{Setup}

We consider $d$ study variables $X=(X_1,\ldots,X_d)$, taking values on a sample space $\mcX=\otimes_{j=1}^d\mcX_j$, and its vector of response indicators $R=(R_1,\ldots,R_d)$, taking values on $\mcR\subseteq \{0,1\}^d$, with $R_j=0$ when variable $j$ is missing and $R_j=1$ when it is observed.  An element $r=(r_1,\ldots,r_d)\in \{0,1\}^d$ is called a response pattern, which we will often represent as the string $r_1\ldots r_d$.  Given $r\in\mcR$, we define $\bar{r}=1_d-r$ to be the missingness pattern, where $1_d = (1,1,\cdots, 1)$ is a vector of ones of length $d$.  For a response pattern $r$, we define $X_{\bar r}=(X_j: r_j=0)$ to be the missing variables and $X_{r}=(X_j: r_j=1)$ to be the observed variables, which have sample spaces $\mcX_{\bar r}$ and $\mcX_{r}$, respectively.  For example, if $X=(X_1,X_2,X_3)$ and $r=101$, that is, the realizations of random variables $X_1$ and $X_3$ are observed, then $X_{101}=(X_1,X_3)$ and $X_{\overline{101}}=X_{010}=X_2$.  

We refer to the true joint distribution $F$ of $X$ and $R$ as the full-data distribution, and denote the collection of full-data distributions by $\mcF$.  We denote the distribution of $X$ given $R=r$ by $F_r$.  We write $f(x\mid r)$ for the density of $F_r$ with respect to an appropriate dominating measure $\mu$, and  
 $f(x,r)$ for the density of $F$ with respect to the product of $\mu$ and the counting measure on $\{0,1\}^d$.  $F$ cannot be estimated from observed data alone, since when $R=r$ we only get to see the realization of $X_r$.  The observed-data distribution $G$ involves the response indicators and the corresponding  observed study variables, with density derived as $g(x_r,r)\equiv \int_{\mcX_{\bar r}} f(x,r) \mu(dx_{\bar r})$.  
We denote the collection of observed-data distributions by $\mcG$.  
Throughout the document we more generally use $g$ to denote density functions that can be directly derived from $G$, and $f$ for density functions that more generally depend on $F$ and not exclusively on $G$.  
Namely, quantities denoted using $g$ can be derived
from the observed-data distribution alone, whereas quantities denoted using $f$ will depend on  identifying assumptions. 
For example, we write 
\begin{equation}\label{eq:fgf}
f(x,r)=g(r)f(x\mid r)=g(x_r,r)f(x_{\bar r}\mid x_r,r),
\end{equation}
where $g(r)=\int_{\mcX_{r}} g(x_r,r) \mu(dx_{r})$, and $f(x_{\bar r}\mid x_r,r)$ is the density of the distribution of $X_{\bar r}\mid X_r=x_r,R=r$, which represents the non-identifiable parts of the full-data distribution, referred to as the extrapolation or missing-data distribution \citep[][p. 166]{DanielsHogan08}.

%


\cite{Little93} refers to models based on the factorization given by \eqref{eq:fgf} as pattern-mixture models, since the marginal distribution of $X$ is obtained as a mixture of the distributions of $X$ given each response pattern, that is, $$f(x)=\sum_r g(r)f(x\mid r)=\sum_r g(x_r, r)f(x_{\bar r}\mid x_r,r).$$ 

\subsection{Identifying assumptions and modeling}

The pattern-mixture factorization explicitly shows the parts of the full-data distribution that cannot be identified from data, represented by the extrapolation densities $f(x_{\bar r}\mid x_r,r)$.  The role of an identifying restriction or assumption $A$ is to map an observed-data distribution $G$ into  a full-data distribution $F_A \equiv A(G)\in \mcF$.  Under the pattern-mixture approach, based on \eqref{eq:fgf}, identifying assumptions amount to indicating how to construct extrapolation distributions based on the observed-data distribution.  In terms of densities, we write $f_A\equiv A(g)$ to indicate that we obtain $f_A(x_{\bar r}\mid x_r,r)$, and therefore $f_A(x,r)$ in \eqref{eq:fgf}, under restriction $A$.


Parametric implementations of pattern-mixture models work under a restricted family of observed-data distributions $\mathcal{M}\subset \mcG$; for example, \cite{Little93} used multivariate Gaussian distributions for each $F_r$.  The combination of modeling of the observed data and specification of an identifying restriction leads to three types of parameters that we need to distinguish, here depicted in Figure \ref{fig:parameters}.  The observed-data distribution $G$ is obtained from a true full-data distribution $F$ under a true missingness mechanism that implicitly defines a mapping $A_0$ such that $F=A_0(G)$, and correspondingly, $\theta_0\equiv \theta(F)\equiv \theta(A_0(G))$ is the true value of a population parameter of interest.  As we have argued, $A_0$ cannot be learned from $G$ alone, and therefore we must work under an assumption $A$ on the missing-data mechanism that conceptually maps $G$ into $A(G)$ in the space of full-data distributions.  The population parameter of interest under assumption $A$ is then $\theta\!_{A}\equiv \theta(A(G))$.  Furthermore, if one uses a parametric approach restricting the observed-data distribution to a class $\mathcal{M}\subset \mcG$, and if $G$ is not contained in $\mathcal{M}$, all we can hope to recover is the distribution $G\!_{\mathcal{M}}$ in $\mathcal{M}$ that diverges the least from $G$, for example in a Kullback-Leibler sense \citep[e.g.,][p. 55]{van1998asymptotic}.  From that distribution, we could then obtain a full-data distribution $A(G\!_{\mathcal{M}})$ under a missing-data restriction $A$, thereby leading to the population parameter $\bar\theta\!_{A}\equiv \theta(A(G\!_{\mathcal{M}}))$.  If the model $\mathcal{M}$ contains the true $G$, then the right and center branches of Figure \ref{fig:parameters} are the same.  If the identifying assumption $A$ is correct, then the left and center branches of Figure \ref{fig:parameters} are the same.  If  model $\mathcal{M}$ and identifying restriction $A$ are both correct then the three branches coincide.  However, we can never recover $A_0$ from observed data alone, so the gap between the left and middle branches of Figure \ref{fig:parameters} completely depends on assumption $A$ and cannot be reduced based on observed-data.  
It is however possible to asymptotically derive inferences under $G$, by modeling it in a nonparametric way.  Our goal can then be seeing as closing the gap between the middle and right branches of Figure \ref{fig:parameters}. 

We now present a unified view and a generalization of commonly used identifying assumptions for pattern-mixture models. 

\begin{figure}
\center
\includegraphics[width=.6\linewidth]{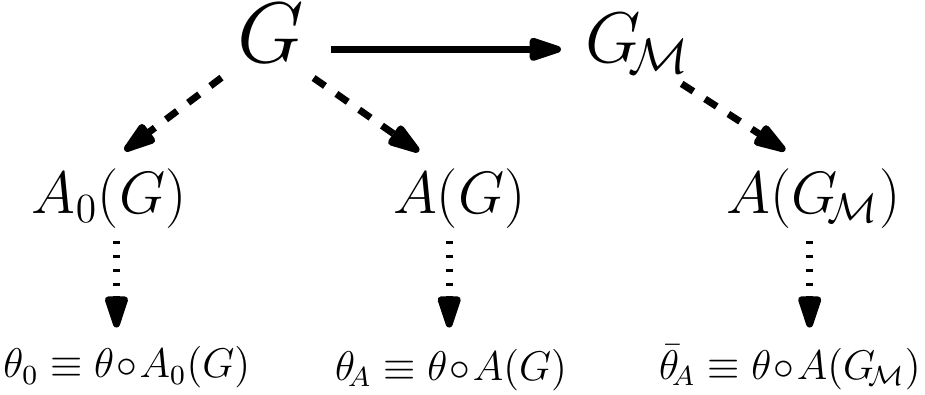}
\caption{Population parameters of interest.  Solid arrow: modeling; dashed arrow: mapping implied by missing-data restriction; dotted arrow: derivation of parameter of interest.}
\label{fig:parameters}
\end{figure}


\subsection{Identifying assumptions under monotone missingness}\label{ss:ident_monotone}

In the context of a longitudinal study, where the study variables $X_1,\dots,X_d$ correspond to measurements over time, it is common to obtain monotone missingness, that is, when missingness in $X_t$ implies missingness in every $X_{t'}$ for $t'>t$, due to dropout of study participants \citep[e.g.,][p. 88]{DanielsHogan08}.  
In such situations, the response pattern $R$ is uniquely determined by the dropout time $T$, that is, $T\equiv |R|$, the number of non-zero entries of $R$.  Given a response pattern $r$ such that $|r|=t$, we denote 
$x_r = (x_1,\cdots,x_{t}) \equiv x_{\leq t}$; likewise $x_{<t} \equiv (x_{1},\cdots, x_{t-1})$ and $x_{>t} \equiv (x_{t+1},\cdots, x_{d})$. Then, the extrapolation density can be written as
\begin{equation}\label{eq:product}
f(x_{\bar r}\mid x_r, r) = f(x_{>t}\mid x_{\leq t}, T = t) = \prod_{s=t+1}^{d} f(x_{s}\mid x_{<s}, T= t).
\end{equation}
Therefore, under monotone missingness, the specification of identifying restrictions for pattern-mixture models amounts to indicating a way of deriving $f(x_{s}\mid x_{<s}, T= t)$ as a function of the observed-data distribution, for each $t=0,\dots,d-1$ and $s=t+1,\dots,d$. 

We now present a unifying framework that encompasses common approaches to identify $f(x_{s}\mid x_{<s}, T= t)$, for $s>t$, which assumes that this conditional distribution is the same as the analogous distribution among observations with later dropout times, called \emph{donors}.  
\begin{definition}[Donor-based identification for monotone missingness]\label{def:donor-restr}
For each $t$ and $s$ with $t<s$, let $\assum_{ts}\subseteq\{s,s+1,\cdots, d\}$ denote a set of dropout times called \emph{donor times}.  A donor-based identification restriction $A$ sets  
\begin{equation}\label{eq:monotone_donor}
f_\assum(x_{s}\mid x_{<s}, T=t) \equiv g(x_{s}\mid x_{<s}, T \in \assum_{ts}).
\end{equation}
\end{definition}
The observations with dropout times in $\assum_{ts}$ can be thought of as distribution-donors, because they are such that their values of $X_1,\dots,X_s$ are observed, and therefore we can use them to obtain the right-hand side of \eqref{eq:monotone_donor}.   The notation $\assum_{ts}$ indicates that this set is associated with an identifying restriction $A$ which can use different sets of donor times for each $t$ and $s$.  
\begin{remark}[Donor-based selection-model restriction]\label{r:selection}
For selection models, Definition \ref{def:donor-restr} can be equivalently formulated in terms of restrictions of the missingness mechanism, namely, assuming \eqref{eq:monotone_donor} is equivalent to assuming that 
$f(T=t\mid x_{\leq s})/f(T\in \assum_{ts} \mid x_{\leq s})$ 
is constant as a function of $x_{s}$.  
\end{remark}
Donor-based restrictions encompass different popular identifying restrictions, which can be specified via different subsets $\assum_{ts}$.
\begin{example}[Complete-case] \cite{Little93} proposed the complete-case (CC) restriction, under which identification is tied to the completely observed cases only, that is, $\cc_{ts}= \{d\}$ for all $s> t$, leading to
$
f_{\cc}(x_{s}\mid x_{<s}, T=t) \equiv g(x_{s}\mid x_{<s}, T=d),
$ or, in terms of the missingness mechanism, it assumes that $f(T=t\mid x_{\leq s})/f(T=d \mid x_{\leq s})$ is constant as a function of $x_s$, for all $s$ and $t$ with $s> t$.
\end{example}
\begin{example}[Available-case] 
\cite{ACMV} proposed the available-case (AC) restriction, where we use all available cases to identify the missing-data distribution, that is, $\ac_{ts} = \{s,s+1,\cdots,d\}$ for all $s> t$, and 
$f_{\ac}(x_{s}\mid x_{<s}, T=t) \equiv g(x_{s}\mid x_{<s}, T\geq s)$.  Equivalently, this assumes that $f(T=t\mid x_{\leq s})/f(T\geq s \mid x_{\leq s})$ is constant as a function of $x_s$, for all $s$ and $t$ with $s> t$.  Under monotone missingness, this assumption is equivalent to the missing at random assumption of \cite{Rubin76}.
\end{example}
\begin{example}[Neighboring-case]\label{ex:NC}
\cite{Thijs} introduced the neighboring-case (NC) restriction, which uses the nearest case available for identification, namely, $\nc_{ts} = \{s\}$ for all $s> t$, and  
$f_{\nc}(x_{s}\mid x_{<s}, T=t) \equiv g(x_{s}\mid x_{<s}, T=s)$.  This corresponds to assuming that $f(T=t\mid x_{\leq s})/f(T=s \mid x_{\leq s})$ does not change with $x_s$,  for all $s> t$.  Under monotone missingness, this is equivalent to the itemwise conditionally independent nonresponse assumption of \cite{sadinle2017itemwise}.
\end{example}
\begin{example}[$k$-nearest-case]
Other reasonable restrictions are possible under our general formulation in \eqref{eq:monotone_donor}.  For example, one could think of a $k$-nearest-case ($k$NC) subclass of restrictions, where $k\nc_{ts} = \{s,\cdots,\min\hspace{-2pt}\text{\small\{}s+k, d\text{\small\}}\}$, and
$f_{k\nc}(x_{s}\mid x_{<s}, T=t) \equiv g(x_{s}\mid x_{<s}, s\leq T\leq s+k)$.  This is equivalent to assuming that $f(T=t\mid x_{\leq s})/f(s\leq T\leq s+k \mid x_{\leq s})$ does not depend on $x_s$,  for all $s$ and $t$ with $s> t$.
\end{example}
Note that the sets $\assum_{ts}$ in the particular cases presented above are the same for all $t$, although they change with $s$.  However, in the formulation given by Definition \ref{def:donor-restr} these sets  could change over both $t$ and $s$.  

\begin{remark}[More general restrictions]\label{r:moregeneral} We also point out that it is possible to devise even more general strategies for identification compared to our donor-based approach; for example, for each $t$ and $s$, $t<s$, we could choose different sets $\assum_{ts}^{(u)}$, and then identify $f_\assum(x_{s}\mid x_{<s}, T=t)=\sum_{u}\omega_{ts}^{(u)} g(x_{s}\mid x_{<s}, T \in \assum_{ts}^{(u)})$, for some set of weights $\{\omega_{ts}^{(u)}\}_u$ that adds up to 1.  This is similar to a general strategy presented by \cite{Thijs}, but such approaches in general lack the interpretability in terms of the missingness mechanism given by Remark \ref{r:selection}.
\end{remark}

\subsection{Identifying assumptions under nonmonotone missingness}\label{ss:ident_nonmonotone}

Most existing identifying restrictions for pattern-mixture models have been developed to handle monotone missingness.  Among the restrictions mentioned in the previous section, the CC restriction is the only one that is readily applicable for nonmonotone nonresponse.  
Under that restriction, the distribution of the missing variables $X_{\bar r}$ given the observed data $(X_r, R=r)$ is the same as the distribution of $X_{\bar r}$ given $(X_r, R=1_d)$, coming from fully-observed responses, regardless of the response pattern $r$.  We obtain
$f_\cc(x_{\bar r}\mid x_{r}, r) \equiv g(x_{\bar r}\mid x_{r}, 1_d)$, 
for all $r\in \{0,1\}^d$.  Other donor-based restrictions, which are intuitively reasonable under monotone missingness, do not currently have a clear analog in the nonmonotone case.  

We note that donor-based restrictions for monotone missingness build on the factorization of the extrapolation density in \eqref{eq:product}, which naturally follows the longitudinal order of the study variables, which in turn defines the monotonicity in the response patterns.  
With nonmonotone missingness, 
to identify a full-data distribution, we need to identify $f(x_{\bar r}\mid x_r, r)$ for each response pattern $r\in\{0,1\}^d$. Each $f(x_{\bar r}\mid x_r, r)$ can be factorized as a product of sequential conditional densities analogously to \eqref{eq:product}, but the order in which the variables appear in the factorization do not have to follow the indexing of the study variables, especially if that indexing is arbitrary.  For example, with $d=4$ variables and response pattern $r=0101$, we need to identify $f(x_{\bar r}\mid x_r, r)=f(x_1,x_3\mid x_2,x_4,R=0101)$, which can be factorized as $f(x_3\mid x_2,x_4,R=0101)f(x_1\mid x_2,x_3,x_4,R=0101)$ or as $f(x_1\mid x_2,x_4,R=0101)f(x_3\mid x_1,x_2,x_4,R=0101)$.  
 
Denote the index set of the study variables by $[d]= (1,\dots, d)$, and let $[d]_{r}=(j: r_j=1)$ and $[d]_{\bar r}=(j: r_j=0)$ be the indices of the observed and missing variables according to $r$, respectively. 
Consider a permutation $\pi^{(r)}$ of $[d]$ such that its first $|r|$ entries, $\pi^{(r)}_{\leq |r|}$, contain the indices of the entries of $r$ that equal 1, $[d]_{r}$.  For example, with $r=0101$, $\pi^{(r)}$ is a permutation of $[4]$ such that its first $|r|=2$ entries equal $2$ and $4$, that is, either $\pi^{(r)}=(2,4,1,3)$ or $\pi^{(r)}=(2,4,3,1)$. Denote 
the $j$th entry of $\pi^{(r)}$ by $\pi^{(r)}_j$. 
The variables $X$ can be reordered using the permutation $\pi^{(r)}$ as $X_{\pi^{(r)}}=(X_{\pi^{(r)}_j}:j\in [d])$, and 
we can use this to base the factorization 
\begin{equation}\label{eq:fact_perm} 
f(x_{\bar r}\mid x_r, r)
=\prod_{j=|r|+1}^{d} f(x_{\pi^{(r)}_j}\mid x_{\pi^{(r)}_{<j}}, r),
\end{equation}
where $x_{\pi^{(r)}_{<j}}$ denotes a value of the $X$ variables with indices corresponding to the first $j-1$ entries of $\pi^{(r)}$. For example, with $r=0101$ and $\pi^{(r)}=(2,4,3,1)$, $X_{\pi^{(r)}}=(X_2,X_4,X_3,X_1)$, and \eqref{eq:fact_perm} corresponds to $f(x_1,x_3\mid x_2,x_4,R=0101)=f(x_3\mid x_2,x_4,R=0101)f(x_1\mid x_2,x_3,x_4,R=0101)$.  Note that with the notation used above, the observed variables according to $r$ are $X_r=X_{\pi^{(r)}_{\leq |r|}}$.


Based on the factorization in \eqref{eq:fact_perm}, we need to identify 
$f(x_{\pi^{(r)}_j}\mid x_{\pi^{(r)}_{<j}}, r)$ for each $j>|r|$. 
To do this, we need to use the response patterns where all of the variables in $X_{\pi^{(r)}_{\leq j}}$ are observed. 
Denote $1[\pi^{(r)}_{\leq j}]$ the response pattern with ones in the entries $\pi^{(r)}_{\leq j}$.  We write $r\preceq r'$ if $r'$ indicates at least the same observed variables as $r$.  
Denote $\mathcal{D}(\pi^{(r)}_{\leq j})\equiv \{r': 1[\pi^{(r)}_{\leq j}]\preceq r'\}$ as the set of response patterns where all of the variables in $X_{\pi^{(r)}_{\leq j}}$ are observed, and therefore $\mathcal{D}(\pi^{(r)}_{\leq j})$ represents the set of potential donors for identification of $f(x_{\pi^{(r)}_j}\mid x_{\pi^{(r)}_{<j}}, r)$.  
For example, with $r=0100$ and $\pi^{(r)}=(2,4,3,1)$, we have $\pi^{(r)}_{\leq 2}=(2,4)$, $\pi^{(r)}_{\leq 3}=(2,4,3)$ and $\pi^{(r)}_{\leq 4}=(2,4,3,1)$; $1[\pi^{(r)}_{\leq 2}]=0101$, $1[\pi^{(r)}_{\leq 3}]=0111$ and $1[\pi^{(r)}_{\leq 4}]=1111$; and finally $\mathcal{D}(\pi^{(r)}_{\leq 2})=\{0101,1101,0111,1111\}$, $\mathcal{D}(\pi^{(r)}_{\leq 3})=\{0111,1111\}$ and $\mathcal{D}(\pi^{(r)}_{\leq 4})=\{1111\}$.  Donor-based identification strategies use subsets of the possible sets of donors that can be used for identification.  




\begin{definition}[Donor-based identification for nonmonotone missingness]\label{def:path-donor}
Given a response pattern $r\in\{0,1\}^d$, let $\pi^{(r)}$ be a permutation of $[d]$ such that its first $|r|$ entries contain the indices $[d]_{r}$, that is, $\pi^{(r)}_{\leq |r|}=[d]_{r}$. 
Let $\mathcal{D}(\pi^{(r)}_{\leq j})\equiv \{r': 1[\pi^{(r)}_{\leq j}]\preceq r'\}$ be the set of response patterns where the $X$ variables with indices $\pi^{(r)}_{\leq j}$ are observed.  
Let $\assum(r,\pi^{(r)}_{\leq j})\subseteq \mathcal{D}(\pi^{(r)}_{\leq j})$ denote a set of response patterns called \emph{donor patterns}, for $j=|r|+1,\dots,d$.  A \emph{donor-based identification restriction} $A$ sets  
\begin{equation}\label{eq:path_donor}
f_\assum(x_{\pi^{(r)}_j}\mid x_{\pi^{(r)}_{<j}}, R=r) \equiv g(x_{\pi^{(r)}_j}\mid x_{\pi^{(r)}_{<j}}, R\in\assum(r,\pi^{(r)}_{\leq j})).
\end{equation}
\end{definition}

As in the monotone case, the observations with response patterns in $\assum(r,\pi^{(r)}_{\leq j})$ can be thought of as donor cases, because they are such that their values of $X_{\pi^{(r)}_{\leq j}}$ are observed, and therefore we can use them to obtain the right-hand side of \eqref{eq:path_donor}.  
Likewise, as in the monotone case, donor-based restrictions can also be redefined in terms of selection models. 
\begin{remark}[Nonmonotone donor-based selection-model restriction]\label{r:pathdonor_selection}
Definition \ref{def:path-donor} can be equivalently formulated for selection models in terms of restrictions of the missingness mechanism, namely, \eqref{eq:path_donor} is equivalent to assuming that 
$f(R=r\mid x_{\pi^{(r)}_{\leq j}})/f(R \in \assum(r,\pi^{(r)}_{\leq j}) \mid x_{\pi^{(r)}_{\leq j}})$ 
is constant as a function of $x_{\pi^{(r)}_{j}}$.  
\end{remark}

We now explain how the CC restriction fits into this general formulation and present some examples that generalize the AC, NC and $k$NC restrictions to nonmonotone nonresponse.
  
\begin{example}[Complete-case] Regardless of the values of $r$, $\pi^{(r)}$, and $j$, the set of possible donors $\mathcal{D}(\pi^{(r)}_{\leq j})\equiv \{r': 1[\pi^{(r)}_{\leq j}]\preceq r'\}$ always contains the pattern $1_d$.  Taking all sets of donor patterns to contain $1_d$ only, that is, $\cc(r,\pi^{(r)}_{\leq j})=\{1_d\}$, leads to the CC restriction, $f_\cc(x_{\bar r}\mid x_{r}, r) \equiv g(x_{\bar r}\mid x_{r}, 1_d)$.  Note that this holds regardless of the pattern $r$ and is invariant to the permutation $\pi^{(r)}$.
\end{example}

\begin{example}[Available-case] We can also use the full sets of available donor patterns for identification, that is, $\ac(r,\pi^{(r)}_{\leq j})= \mathcal{D}(\pi^{(r)}_{\leq j})$, leading to 
\[f_\ac(x_{\pi^{(r)}_j}\mid x_{\pi^{(r)}_{<j}}, R=r) \equiv g(x_{\pi^{(r)}_j}\mid x_{\pi^{(r)}_{<j}}, R\succeq 1[\pi^{(r)}_{\leq j}]),\]
for all $r$, $\pi^{(r)}$ and $j>|r|$.  For example, with $r=0100$ and $\pi^{(r)}=(2,4,3,1)$, we obtain 
$f_\ac(x_{4}\mid x_{2}, R=0100) \equiv g(x_{4}\mid x_{2}, R\succeq 0101)$, $f_\ac(x_{3}\mid x_{2},x_4, R=0100) \equiv g(x_{3}\mid x_{2},x_4, R\succeq 0111)$, and $f_\ac(x_{1}\mid x_{2},x_3,x_4, R=0100) \equiv g(x_{1}\mid x_{2},x_3,x_4, R=1111)$.  We notice that, naturally, different factorizations based on different permutations $\pi^{(r)}$ will lead to different identified full-data distributions.
\end{example}

\begin{example}[Neighboring-case]\label{ex:NC}  The pattern $1[\pi^{(r)}_{\leq j}]$ is the closest pattern to $r$ under which all the variables $X_{\pi^{(r)}_{\leq j}}$ are available, and therefore it represents observations with the minimal variables needed to identify $f(x_{\pi^{(r)}_j}\mid x_{\pi^{(r)}_{<j}}, R=r)$.  Therefore, it is reasonable to take $\nc(r,\pi^{(r)}_{\leq j})=\{1[\pi^{(r)}_{\leq j}]\}$ for all $r$, $\pi^{(r)}$ and $j>|r|$.  For example, with $r=0100$ and $\pi^{(r)}=(2,4,3,1)$, we obtain 
$f_\nc(x_{4}\mid x_{2}, R=0100) \equiv g(x_{4}\mid x_{2}, R=0101)$, $f_\nc(x_{3}\mid x_{2},x_4, R=0100) \equiv g(x_{3}\mid x_{2},x_4, R= 0111)$, and $f_\nc(x_{1}\mid x_{2},x_3,x_4, R=0100) \equiv g(x_{1}\mid x_{2},x_3,x_4, R=1111)$.  Here again, the final identified full-data distribution depends on the chosen factorization based on permutation $\pi^{(r)}$.
\end{example}
\begin{example}[$k$-nearest-case] We can define the donor patterns within $k$ distance of $1[\pi^{(r)}_{\leq j}]$ as $\mathcal{D}_k(\pi^{(r)}_{\leq j})\equiv \{r': 1[\pi^{(r)}_{\leq j}]\preceq r', H(1[\pi^{(r)}_{\leq j}], r')\leq k\}$, where $H(r,r')$ is the Hamming distance between two response patterns, which counts the number of entries where $r$ and $r'$ disagree.  Then, the $k$NC restriction takes $k\nc(r,\pi^{(r)}_{\leq j})= \mathcal{D}_k(\pi^{(r)}_{\leq j})$.  
\end{example}

\begin{remark}[More general restrictions]\label{r:moregeneral2} Along the lines of Remark \ref{r:moregeneral}, here we also point out that it is possible to devise even more general identification strategies compared to our donor-based approach.  In Definition \ref{def:path-donor} we tie identification to a set of donors, but a more general approach would be to take a mixture of 
$g(x_{\pi^{(r)}_j}\mid x_{\pi^{(r)}_{<j}}, R\in\assum^{(u)}(r,\pi^{(r)}_{\leq j}))$
across $u$, as in Remark \ref{r:moregeneral}, for fixed permutations $\{\pi^{(r)}\}_r$.  Yet, an even more general approach is to identify multiple versions of $f(x_{\bar r}\mid x_r, r)$ based on different factorizations induced by multiple sets of permutations $\{\pi^{(r)}\}_r$, and then take a weighted average of them.  Again, these approaches lack the interpretability in terms of the missingness mechanism given by Remark \ref{r:pathdonor_selection}.
\end{remark}

\subsection{Nonparametric identification and sensitivity analysis}

The previous sections indicate that, with pattern-mixture models, we start with an observed-data density $g(x_r,r)$, and complement it with extrapolation densities $f_A(x_{\bar r}\mid x_r,r)$ obtained from an identifying restriction $A$, leading to a full-data distribution with density $f_A(x,r)$.  This construction is such that $\int_{\mcX_{\bar r}} f_A(x,r) \mu(dx_{\bar r})= g(x_r,r)$, that is, the observed-data distribution implied by the full-data distribution under $A$ is the same as the true observed-data distribution, meaning that the identifying restrictions only impose constraints on what cannot be recovered from observed data.  This implies that the full-data distribution under $A$ is observationally equivalent to the true full-data distribution, and it means that assumption $A$ cannot be tested based on the observed data \citep[e.g.,][]{sadinle2019SAN}.  This desirable characteristic is known as nonparametric identification, nonparametric saturation or just-identification \citep{Robins97,Vansteelandtetal06, DanielsHogan08, HoonhoutRidder18}. 

The choice of the appropriate identifying restriction is no trivial matter, as it cannot be justified based on the observed data.  Regardless of whether one feels confident about an identifying restriction, it is desirable to perform sensitivity analyses, where inferences are obtained under different identifying restrictions \citep{Scharfstein18}.  The advantage of the construction that we have presented is that the same observed-data distribution can be used to obtain different full-data distributions under different identifying restrictions.  Since all of such full-data distributions will be observationally equivalent, discrepancies in inferences will be entirely due to the different identifying restrictions, making pattern-mixture models ideal for conducting sensitivity analyses.

\section{Nonparametric Estimation}\label{sec::estimate}

As mentioned before, the observed-data distribution $G$, with density $g(x_r, r)$, is directly estimable from the observed data. 
The observed data $S_n=(X_{i,R_i},R_i)_{i=1}^n$ corresponds to a random sample from $G$.  The extrapolation densities, represented by $f(x_{\bar r}\mid x_r,r)$, have to be recovered under an identifying assumption $A$, which provides a recipe for writing $f_A$ in terms of $g$, and we write $f_A=A(g)$.  
Definitions \ref{def:donor-restr} and \ref{def:path-donor} provide explicit forms of $f_A = A(g)$ under donor-based identification. 
Under $A$, the full-data density is given by
$$
f_A(x,r)=g(x_r, r)f_A(x_{\bar r}\mid x_r,r),
$$
and
 the associated full-data distribution is given by 
\begin{equation}\label{eq:FA}
\begin{aligned}
F_A(B,r) 
& = \int_{B_r} F_A(B_{\bar r}\mid x_r,r)~G(dx_r,r),
\end{aligned}
\end{equation}
where $B=B_r\otimes B_{\bar{r}}=\otimes_{j=1}^d B_j$, for measurable subsets $B_j$ of $\mcX_j$.   In this expression, $G(B_r,r)=\int_{B_r} g(x_r,r)\mu(dx_r)$ and $F_A(B_{\bar r}\mid x_r,r)=\int_{B_{\bar r}} f_A(x_{\bar r}\mid x_r,r)~\mu(dx_{\bar{r}})$. 

\subsection{Estimating the Full-Data Distribution} 	\label{sec::est::full}

To construct a nonparametric estimator of $F_A$, we propose to directly replace $G$ in \eqref{eq:FA} by the empirical observed-data distribution $\hat G$, 
\begin{equation}
\hat{G}(B_r,r) = \frac{1}{n}\sum_{i=1}^n I(X_{i,r}\in B_r, R_i = r).
\label{eq:Ghat}
\end{equation}

Handling  $F_A(B_{\bar r}\mid x_r,r)$ is more challenging, because identifying restrictions are typically expressed in terms of density functions, and so $f_A(x_{\bar r}\mid x_r,r)$ will usually have an explicit expression in terms of $g$, as seen in Section \ref{sec:PMMs}.  This means that we cannot typically obtain an estimate of $F_A(B_{\bar r}\mid x_r,r)$ directly as a function of the empirical observed-data distribution $\hat G$.
 We then propose to obtain a nonparametric estimate of $F_A(B_{\bar r}\mid x_r,r)$ via a \emph{surrogate estimate} of $g$, that is, an estimate of the observed-data density obtained for the sole purpose of being used in the estimation of $F_A(B_{\bar r}\mid x_r,r)$.  In particular, we take the surrogate as the kernel density estimator
\begin{equation}
\hat{g}_h(x_r, r)
= \frac{1}{n}\sum_{i=1}^{n} I(R_i = {r}) \prod_{j=1}^d K_j(x_j;X_{ij},h_j)^{r_j}.
\label{eq::kde}
\end{equation}
In this expression, $K_j$ represents a density with respect to an appropriate dominating measure, which depends on the nature of variable $X_j$, a location $X_{ij}$ and a smoothing parameter $h_j$.  For continuous variables, we take $K_j$ to be a Gaussian density centered at $X_{ij}$ with variance $h_j^2$, and so when all $d$ variables are continuous,  expression \eqref{eq::kde} closely corresponds to a traditional kernel density estimator \citep{Rosenblatt56,Parzen62,silverman1986density}.  For an unordered categorical variable with $C_j$ categories, we take $K_j(x_j;X_{ij},h_j) = I(x_j=X_{ij})h_j+I(x_j\neq X_{ij})(1-h_j)/(C_j-1)$, for $C_j^{-1}\leq h_j\leq 1$, which corresponds to the kernel method proposed by \cite{AitchisonAitken76}.  Different types of variables can be handled in this fashion as long as  appropriate kernels are used \citep{Titterington80,WangRyzin81,Kokonendji09,ChenTang11,LiRacine03}.  

We then obtain $\hat f_{A,h}(x_{\bar r}\mid x_r, r)$ using $\hat g_h(x_r,r)$ and the mapping induced by assumption $A$, that is, $\hat f_{A,h}=A(\hat g_h)$.  Then, the extrapolation distributions are obtained as
\begin{equation}
\hat{F}_{A,h}(B_{\bar r}\mid x_r,r) = \int_{B_{\bar r}}\hat{f}_{A,h}(x_{\bar r}\mid x_r, r)\mu(dx_{\bar r}).
\label{eq::extra}
\end{equation}
Finally, our estimator of the full-data distribution is obtained by plugging into equation \eqref{eq:FA}: 
\begin{equation}
\hat{F}_{A,h}(B,r)  = \frac{1}{n}\sum_{i=1}^n\hat{F}_{A,h}(B_{\bar r}\mid X_{i,r},r)
I(X_{i,r}\in B_r, R_i = r).
\label{eq::cdf}
\end{equation}

To illustrate this estimator we now provide details for monotone nonresponse, and provide the construction of $\hat{F}_{A,h}(B,r)$ under the CC restriction in Appendix \ref{sec::simple}. 

%
%

\begin{example}[Monotone missingness]\label{ex::monotone_FDD} As a concrete example, we present our estimation approach for the case of monotone missingness.  The nonmonotone case can be handled similarly under the identification approach presented in Section \ref{ss:ident_nonmonotone}, albeit with a more intricate notation.  The observed sample in this case is represented by $S_n=(X_{i,\leq T_i},T_i)_{i=1}^n$, where $X_{i,\leq T_i}=(X_{i,1},\dots,X_{i,T_i})$.  Following the notation in Section \ref{ss:ident_monotone}, we have that 
\begin{equation}\label{eq:est_product}
\hat{f}_{\assum,h}(x_{\bar r}\mid x_r, r) = \hat{f}_{\assum,h}(x_{>t}\mid x_{\leq t}, T = t)=\prod_{s=t+1}^{d} \hat{f}_{\assum,h}(x_{s}\mid x_{<s}, T= t),
\end{equation}
where 
\begin{equation}\label{eq:est_monotone_donor}
\hat{f}_{\assum,h}(x_{s}\mid x_{<s}, T= t)=\hat{g}_{h}(x_{s}\mid x_{<s}, T \in \assum_{ts}),
\end{equation}
and 
\begin{equation}\label{eq:gmixture}
\hat{g}_{h}(x_{s}\mid x_{<s}, T \in \assum_{ts})=\sum_{i=1}^{n}W_i(x_{<s};\assum_{ts})K_s(x_s;X_{is},h_s),
\end{equation}
with 
\begin{equation}\label{eq:W_i}
W_i(x_{<s};\assum_{ts})\propto I(T_i \in \assum_{ts}) \prod_{j=1}^{s-1} K_j(x_j;X_{ij},h_j).
\end{equation}
\end{example}

\begin{remark}[Finite-sample vs asymptotic nonparametric saturation] 
For an estimator of the full-data distribution $\hat{F}$, we say that it is
\emph{finite-sample nonparametrically saturated} if for any $r$ and any Borel measurable set $B_r\subseteq \mcX_{r}$, 
$\hat{F}(B_r\otimes \mcX_{\bar r}, r) = \hat G(B_r,r)$ for every sample size $n$. 
If $\hat{F}(B_r\otimes \mcX_{\bar r}, r) = \hat G(B_r,r)$ only when $n\rightarrow\infty$, $\hat{F}$
is called \emph{asymptotically nonparametrically saturated}. 
It is clear that the estimator in equation \eqref{eq::cdf} satisfies $\hat{F}_{A,h}(B_r\otimes\mcX_{\bar r},r)=\hat G(B_r,r)$
for all $n$, which means that $\hat{F}_{A,h}$ is finite-sample nonparametrically saturated. 
In Remark \ref{rm::ANS}, we give an example of an estimator which is only asymptotically nonparametrically saturated. 
\end{remark}

\subsection{Estimation of Functionals}	\label{sec::functional}

Typically, we are interested in estimating a statistical functional $\theta \equiv \theta(F)$.  We can construct an estimator by evaluating $\theta(\cdot)$ on \eqref{eq::cdf}, that is, 
\begin{equation}
\hat{\theta}_{A,h} \equiv \theta(\hat{F}_{A,h}).
\label{eq::par}
\end{equation}
We will show that under some conditions this estimator is consistent and has asymptotic normality (Corollary \ref{cor::linear}).  
While $\hat{\theta}_{A,h}$ may have a closed-form under some simple cases (see an example in Appendix \ref{sec::simple}),
evaluating $\theta(\cdot)$ on $\hat{F}_{A,h}$ is not easy in general. 
For example, computing moments of $X$ under the assumptions presented before involves computing integrals of ratios of kernel density estimates.  Here we propose a Monte Carlo approach to approximate $\hat{\theta}_{A,h}$, in particular under the large class of donor-based assumptions presented in Sections \ref{ss:ident_monotone} and \ref{ss:ident_nonmonotone}.

\subsubsection{Monte Carlo Approximation}\label{sec::MCE}

Under our estimate $\hat{F}_{A,h}$, the conditional distribution of $(X_{\bar r}\mid X_{r}=x_r, R=r)$ has a density $\hat f_{A,h}(x_{\bar r}\mid x_r, r)$, whereas the distribution of $(X_{R}, R)$ is discrete given by the empirical observed-data distribution.  We propose to approximate  $\theta(\hat{F}_{A,h})$ via Monte Carlo.  
In particular, we draw $X^{(v)}_{i,\bar{R}_i}$ from $\hat{F}_{A,h}(\cdot\mid X_{i,R_i},R_i)$, for $v=1,\dots,V$, for each observed sample point $(X_{i,R_i},R_i)$.  The result can be organized into $nV$ completed sample points $S_{n,V}=\{(X^{(v)}_i,R_i)_{i=1}^n\}_{v=1}^V$, where $X^{(v)}_i=(X_{i,R_i},X^{(v)}_{i,\bar R_i})$.  We can then define a Monte Carlo approximation of $\hat{F}_{A,h}$ as 
\begin{equation}
\hat{F}_{A,h}^{\textsc{mc}}(B,r)  = \frac{1}{nV}\sum_{i=1}^n I(X_{i,r}\in B_r, R_i = r)\sum_{v=1}^V I(X^{(v)}_{i,\bar r}\in B_{\bar r}),
\label{eq::cdfMC}
\end{equation}
and we approximate $\theta(\hat{F}_{A,h})$ by $\theta(\hat{F}_{A,h}^{\textsc{mc}})$.  Computing the latter is simple, as it entails evaluating the sample version of the functional $\theta(\cdot)$ on $S_{n,V}$.  For instance, if the parameter of interest is a  correlation coefficient, we can use the sample correlation coefficients computed on $S_{n,V}$.

We note that this Monte Carlo procedure is analogous to the one used in the multiple imputation approach of \cite{Rubin87}, in the sense that $S_{n,V}$ corresponds to the observed data $S_n$ being completed $V$ times.  Unlike in Rubin's multiple imputation, our goal here is to simply approximate $\theta(\hat{F}_{A,h})$, as we  assess the variability of $\theta(\hat{F}_{A,h})$ via the bootstrap, as explained in Section \ref{sec::bootstrap}.  


\begin{example}[Monotone missingness]	Continuing with Example \ref{ex::monotone_FDD}, we now need to draw $X^{(v)}_{i,> T_i}$ from $\hat{F}_{A,h}(\cdot\mid X_{i,\leq T_i},T_i)$, for $v=1,\dots,V$, and for each observed sample point $(X_{i,\leq T_i},T_i)$.  From \eqref{eq:est_product} we can see that a draw $X^{(v)}_{i,> T_i}$ can be obtained by sequentially sampling $X^{(v)}_{is}$, $s=T_i+1,\dots,d$, from a distribution with density $\hat{g}_{h}(x_{s}\mid X^{(v)}_{i,<s}, T \in \assum_{T_i,s})$, where $X^{(v)}_{i,<s}=(X^{(v)}_{i,<s-1},X^{(v)}_{i,s-1})$ and $X^{(v)}_{i,<T_i+1}=X_{i,\leq T_i}$.

Sampling from $\hat{g}_{h}(x_{s}\mid X^{(v)}_{i,<s}, T \in \assum_{T_i,s})$ can be accomplished using the mixture representation given by \eqref{eq:gmixture}, by sampling an index $\ell$ with probability $W_\ell(X^{(v)}_{i,<s};\assum_{T_i,s})$ as in \eqref{eq:W_i}, and then drawing $X^{(v)}_{is}$ from a distribution with density $K_s(x_s;X_{\ell s},h_s)$.  We summarize this procedure in Algorithm \ref{alg::MC}.
In the Online Supplementary Material we provide R code implementing Algorithm~\ref{alg::MC} for handling monotone missingness under donor-based identification.


\end{example}

\begin{algorithm}[tb]
\caption{Sampling for Monte Carlo Approximation under Monotone Missingness} 
\label{alg::MC}
\begin{algorithmic}
	\For{$v=1,\dots,V; i=1,\dots,n$}
		\State $t\gets T_i$, $X^{(v)}_{\leq t}\gets X_{i,\leq t}$
		\For{$s = t+1,\cdots, d$}
			\State Draw $\ell\in\{1,\cdots, n\}$ with probability $W_\ell(X^{(v)}_{<s};\assum_{ts})$
			\State Draw $X^{(v)}_{s}$ from the distribution with density $K_s(\cdot;X_{\ell, s},h_s)$
			\State $X^{(v)}_{\leq s} \gets (X^{(v)}_{<s}, X^{(v)}_s)$
		\EndFor
		\State $X^{(v)}_{i,> t} \gets X^{(v)}_{> t}$
	\EndFor\\
\Return{$\{(X_{i,\leq T_i},X^{(v)}_{i,> T_i})_{i=1}^n\}_{v=1}^V$}
\end{algorithmic}
\end{algorithm}


\begin{remark}[Alternative Monte Carlo approximation]	\label{rm::MCE2}
For each completed sample $S_{n}^{v} = \{(X_i^{(v)} ,R_i):i=1,\cdots, n\}$,
let $\hat{F}_{A,h}^{(v)}$ be its corresponding empirical distribution.
Then the average of the corresponding estimates 
$$
\tilde{\theta}_V = \frac{1}{V}\sum_{v=1}^V \theta\left(\hat{F}_{A,h}^{(v)}\right)
$$
provides another Monte Carlo approximation to $\theta(\hat{F}_{A,h})$.  This approach might be appealing if computing $\theta(\cdot)$ on $\hat{F}_{A,h}^{\textsc{mc}}$ is too computationally intensive.  In such case, each $\theta\left(\hat{F}_{A,h}^{(v)}\right)$ can be computed in parallel, and the results can be easily combined to obtain a final approximation $\tilde{\theta}_V$ of $\theta(\hat{F}_{A,h})$.  Nevertheless, we expect any potential bias of $\theta(\hat{F}_{A,h}^{\textsc{mc}})$ to be smaller than the bias of $\tilde{\theta}_V$, and therefore $\theta(\hat{F}_{A,h}^{\textsc{mc}})$ should be the preferred approximation of $\theta(\hat{F}_{A,h})$.
\end{remark}

\subsubsection{Parameters Defined by Estimating Equations}


If the parameter of interest $\theta$ is determined by an 
estimating function $\E(\xi(X,\theta))=0$ for some function $\xi$,
then our method can be viewed as a generalization of the work of \cite{WangChen09}, who consider the problem of estimating parameters defined by estimating functions when a continuous response variable is subject to missingness but covariates are fully observed. 
\cite{WangChen09} proposed a Monte Carlo method to impute the missing data,
%
and apply the empirical likelihood approach
to find parameter of interest. 
When only one continuous variable is subject to missingness and we set the bandwidth $h_s=0$ in 
Algorithm \ref{alg::MC}, our method will be the same as the Monte Carlo method of \cite{WangChen09}.

Note that \cite{WangChen09} also discussed how to use
the bootstrap for constructing a confidence interval for the parameter of interest. 
We will also analyze the same idea 
%
in Section~\ref{sec::bootstrap} 
with a more general setting in terms of both the missingness assumption
and the parameter of interest, i.e., we do not restrict ourselves to estimators from empirical likelihood methods. 

%
%
%
%
%
%
%
%
%
%

\subsection{Estimating the Full-Data Density}

If the goal is to recover the joint density function of the study variables, one possible estimator would be the one that uses the surrogate estimate of $g$, that is, $\hat{f}_{A,h}(x)=\sum_r \hat{f}_{A,h}(x,r)$, where 
\begin{equation}
\hat{f}_{A,h}(x,r)= \hat{g}_h(x_r,r)\hat{f}_{A,h}(x_{\bar r}\mid x_r,r).
\label{eq::est}
\end{equation}
Note that the distribution generated by $\hat{f}_{A,h}$ is not the same
as our proposed $\hat{F}_{A,h}$,
because the latter does not involve
any smoothing of the observed-data distribution, whereas the surrogate estimate $\hat{g}_h(x_r,r)$ does involves smoothing.

\begin{remark}
\label{rm::ANS}
As is mentioned before, the estimator $\hat{F}_{A,h}$ is
finite-sample nonparametrically saturated.  
However, the distribution induced by the density estimator $\hat{f}_{A,h}(x,r)$
is not because of the effect of smoothing in the construction of the surrogate estimator $\hat{g}_h(x_r,r)$.
The distribution induced by $\hat{f}_{A,h}(x,r)$
is nevertheless \emph{asymptotically nonparametrically saturated} 
under a good choice of smoothing bandwidths.  For example for continuous variables, a good choice of smoothing bandwidth $h=h_n\rightarrow 0$ leads $\hat{g}_h$
to be a consistent estimator of the true observed-data density function $g$. 
\end{remark}

\section{Bootstrap Confidence Intervals}	\label{sec::bootstrap}

In what follows, we discuss how to use the bootstrap approach to obtain inferences on functions of 
the full-data distribution. 
In particular, we use the empirical bootstrap \citep{efron1979,efron1994introduction}.  Let $\theta= \theta(F)$ be the parameter of interest and $\hat{\theta}_{A,h} = \theta(\hat{F}_{A,h})$ be our estimate. 
Let $S^{*}_n = \{(X^*_{1,R_1}, R^*_1),\cdots, (X^*_{n,R_n}, R^*_n)\}$ 
be a bootstrap sample obtained by sampling with replacement from $S_n$, and let $\hat\theta^*_{A,h} = \theta(\hat{F}_{A,h}^*)$ be the corresponding estimator, 
where $\hat{F}_{A,h}^*$ is constructed using equation \eqref{eq::cdf} with the bootstrap sample.
After repeating the bootstrap procedure $B$ times 
we obtain $\hat{\theta}^{*(1)}_{A,h},\cdots,\hat{\theta}^{*(B)}_{A,h}$,
$B$ bootstrap estimates of the parameter of interest.
We use the upper and lower $\alpha/2$ quantiles of these $B$ numbers
as the confidence interval of $\theta$. 
Specifically,
we propose to use the interval 
\[\hat{C}_{n,\alpha}=\left[\ell_\alpha, u_\alpha\right] = \left[\hat{\Gamma}^{-1}(\alpha/2), \hat{\Gamma}^{-1}(1-\alpha/2)\right],\]
where 
\[\hat{\Gamma}(t) = \frac{1}{B}\sum_{b=1}^B I(\hat{\theta}^{*(b)}_{A,h} \leq t),\]
as the confidence interval of $\theta$.
In Theorem \ref{thm::LBT}, we prove that $\hat{C}_{n,\alpha}$ is asymptotically
valid under appropriate assumptions.

Note that the above method is called the bootstrap percentile approach \citep{efron1994introduction,hall2013bootstrap}. 
There are other possible approaches for constructing a confidence interval, such as
using the bootstrap variance estimator to construct a normal confidence interval,
or bootstrapping the $t$-distribution. 
See \cite{efron1994introduction} and \cite{hall2013bootstrap} for more details on other approaches. 

When our estimator is constructed using the Monte Carlo approximation described in 
Section~\ref{sec::MCE}, for each bootstrap sample $S^*_n$, we apply Algorithm \ref{alg::MC} 
to obtain a completed sample $S^*_{n,V}$.
Using $S^*_{n,V}$ and its corresponding estimate of the full-data distribution $\hat{F}^{\textsc{mc}*}_{A,h}$, 
we then obtain our bootstrap estimate $\theta(\hat{F}^{\textsc{mc}*}_{A,h})$.  
Repeating this procedure for $B$ bootstrap samples leads to $\theta(\hat{F}^{\textsc{mc}*(1)}_{A,h}), \dots \theta(\hat{F}^{\textsc{mc}*(B)}_{A,h})$, from which we obtain the quantiles to construct the confidence interval $\hat{C}_{n,\alpha}$.

\begin{remark}
One can also use the bootstrap to infer the density function.
This is often done by bootstrapping the $L_\infty$ error of the density estimate. 
Specifically, we use the quantile of $\{\sup_x|\hat{	f}_{A,h}^{*(b)}(x,r) - \hat{f}_{A,h}(x,r)|\}_{b=1}^B$
as the width of the confidence band.
It can be shown that this leads to a valid confidence band for $f_{A,h}(x,r)$
for each $r$.
For more details, we refer to \cite{chernozhukov2014anti}
and \cite{chen2017nonparametric}. 
\end{remark}

\section{Theory}	\label{sec::theory}

In the theoretical analysis, we will show that the estimator $\hat{F}_{A,h}$
is a consistent estimator of $F$ under appropriate assumptions. 
To simplify the analysis, we assume that all variables in $X$ are continuous
and we will focus on the case where the sets $B_j = (-\infty, x_j]$
for some $x_j$
so that we can simply write $F(B,r) = F(x,r)$
for $B= \otimes_{j=1}^d B_j = \otimes_{j=1}^d (-\infty,x_j]$.
Our results, with the exception of the ones involving estimating densities, can be generalized
to the case with categorical variables very easily. 
Also, to simplify the problem,
we assume that the smoothing bandwidth of each variable is the same, i.e., $h_s=h$ for all $s=1,\cdots,d$,
and all kernel functions are the same, i.e., $K_s = K$ for all $s=1,\cdots, d$.
We will prove theory for the monotone missingness case, 
but the general case in Section \ref{ss:ident_nonmonotone}
can be derived in a similar way, albeit with a more complicated notation. 
With monotone missingness, it is easier to use the notation $T=|R|$, $t=|r|\in \{0,1,\cdots, d\}$ 
to represent the missingness pattern, 
so the joint distribution function is written as $F(x,t)$
and its estimator is $\hat{F}_{A,h}(x,t)$.  We note that $F(x,t)$ will denote the joint distribution function of study variables $X$
with $T=t$, namely,
\begin{equation}
F(x,t) = F(x\mid t) g(t) = \int_{-\infty}^x f(x',t)\mu(dx').
\end{equation}
An important note is that in the definition of $F(x,t)$ we are integrating over the sample space of $X$ until $x$, but not over the sample space of $T$, that is, $F(\infty,t)$ gives us the probability of $T=t$, $g(t)$.  Before moving forward, we define two asymptotic functions related to $\hat{F}_{A,h}(x,t)$ and $\hat{f}_{A,h}(x,t)$ as 
\begin{equation}
\begin{aligned}
\bar{F}_{A,h}(x,t)  
& = \int_{-\infty}^{x_{\leq t}}\bar{F}_{A,h}(x_{>t}\mid x_{\leq t}',t)G(dx_{\leq t}', t), \\
\bar{f}_{A,h}(x,t) &=  {g}(t)\bar {g}_h(x_{\leq t}\mid t)\bar{f}_{A,h}(x_{>t}\mid x_{\leq t},t),
\end{aligned}
\label{eq::exp}
\end{equation}
where 
\begin{equation}
\bar{F}_{A,h}(x_{>t}\mid x_{\leq t},t) = \int_{-\infty}^{x_{>t}} \bar{f}_{A,h}(x'_{> t}\mid x_{\leq t},t)\mu(dx'_{>t})
\label{eq::extra2}
\end{equation}
is the extrapolation distribution,
and 
$
\bar{f}_{A,h}(x_{> t}\mid x_{\leq t},t) = A (\bar{g}_h)
$
is the extrapolation density constructed by 
the smoothed density function
 $\bar{g}_h(x_{\leq t}\mid t) = \E(\hat{g}_h(x_{\leq t}\mid t))$
and $\bar{g}_h(x_{\leq t},t) = \bar{g}_h(x_{\leq t}\mid t) g(t)$.
Essentially, $\bar{f}_{A,h}(x,t)$
is constructed using the expected value of the kernel density estimator (KDE).
However, $\bar{f}_{A,h}(x,t)\neq \E(\hat{f}_{A,h}(x,t))$ although 
we do have $\bar{f}_{A,h}(x,t)= \E(\hat{f}_{A,h}(x,t)) + o(1)$ when $n\rightarrow\infty$ and $h\rightarrow0$. 
Later we will see that $\bar{f}_{A,h}(x,t)$ plays a central role in estimation theory and
asymptotic normality. 

Note that the density function corresponding to $\bar{F}_{A,h}(x,t)$ is not $\bar{f}_{A,h}(x,t)$, 
since the marginal distribution function $\bar{F}_{A,h}(x_{\leq t}, t) = G(x_{\leq t},t)$, which does not have $\bar{g}_h(x_{\leq t},t)$ as a density.  If we are thinking about the conditional distribution,
the density corresponding to
$\bar{F}_{A,h}(x_{>t}\mid x_{\leq t}, t)$ is $\bar{f}_{A,h}(x_{>t}\mid x_{\leq t}, t)$, and therefore $\bar{F}_{A,h}(x,t)$ and $\bar{f}_{A,h}(x,t)$ agree only on the extrapolation densities.

%
%
%
%

\subsection{Estimation Theory}

Let $\mathbf{UBC}_2$ denote the collection of
all functions with uniformly bounded second derivatives. 
Moreover, we use the simplified notation
\begin{equation}
\begin{aligned}
K_h\left(x;X_i\right)&\equiv \prod_{j=1}^{d} K_h\left(x_j;X_{ij}\right)\equiv \prod_{j=1}^{d} K_j(x_j;X_{ij},h),\\
K_h\left(x_{\leq t};X_{i,\leq t}\right)&\equiv \prod_{j=1}^{t} K_h\left(x_j; X_{ij}\right)\equiv \prod_{j=1}^{t} K_j(x_j;X_{ij},h).  
\end{aligned}
\label{eq::kernel}
\end{equation}
For univariate variables $w,z$, 
the kernel function $K_h\left(w;z\right)$ can be written as 
$$
K_h\left(w;z\right) = \frac{1}{h} K\left(\frac{w-z}{h}\right),
$$
where $K\left(z\right)$ is the conventional kernel function. For instance, 
the Gaussian kernel has the form $K(z) = \frac{1}{\sqrt{2\pi}}e^{-z^2/2}$.


{\bf Assumptions.}
\begin{itemize}
\item[(A1)] The true full-data distribution function $F(x,t)$ has a density function $f_0(x,t)$ satisfying
\begin{enumerate}
\item $\inf_{x\in\mathcal{X}} f_0(x,t)>0$ for each $t=1,\cdots, d$.
\item $f_0(x,t)\in \mathbf{UBC}_2$ for each $t=1,\cdots, d$.
\end{enumerate}

\item[(A2)] The statistical functional $\theta$ is Hadamard differentiable. 
\item[(K1)] $K(z)$ has at least second-order bounded derivative
and
$$
\int z^2 K(z) \mu(dz) <\infty, \qquad \int  K^2(z) \mu(dz) <\infty.
$$

\item[(K2)] 
Let $\mathcal{K} = \left\{z\mapsto K\left(\frac{z-w}{h}\right): w\in\mathbb{R}, \bar{h}>h>0\right\}$, 
for some fixed constant $\bar{h}.$
We assume that $\mathcal{K}$ is a VC-type class. Namely,
there exists constants $A,v$ and a constant envelope $b_0$ such that
\begin{equation}
\sup_{Q} N(\mathcal{K}, \mathcal{L}^2(Q), b_0\epsilon)\leq \left(\frac{A}{\epsilon}\right)^v,
\label{eq::VC}
\end{equation}
where $N(T,d_T,\epsilon)$ is the $\epsilon$-covering number for a
semi-metric set $T$ with metric $d_T$, and $\mathcal{L}^2(Q)$ is the $L_2$ norm
with respect to the probability measure $Q$.
\end{itemize}

Assumption (A1) is to ensure that the probability density function (PDF) of the true full-data distribution 
is bounded away from $0$ on the support $\mathcal{X}$. 
Note that this implies that the corresponding observed-data PDF $g_0$
has a density value uniformly bounded away from $0$.
We need bounded second derivatives so that the bias of the KDE will be 
of the order $O(h^2)$. 
Assumption (A2) ensures the smoothness of the parameter of interest. 
Most common statistical functions, such as quantiles and moments, satisfy this condition. 

Assumption (K1) is a common assumption for the kernel function so that
we can obtain the conventional rate of the bias and variance \citep{scott2015multivariate,wasserman2006all}. 
Assumption (K2) is an assumption to guarantee uniform convergence 
of a KDE \citep{Gine2002,Einmahl2005}. 
Most common kernel functions, such as the Gaussian kernel or a compact support kernel, 
satisfy this condition \citep{Gine2002}. 
Note that under assumption (K2), 
the requirements of the smoothing bandwidth $\frac{\log n}{nh^d}\rightarrow 0$ and $h\rightarrow 0$ 
amount to the uniform convergence of the KDE, see, e.g., 
\cite{Rinaldo2010, genovese2014nonparametric,chen2016generalized}.


For an identifying assumption $A$, we define
\begin{align*}
F_A(x,t) &= \int_{-\infty}^{x} f_A(x',t)\mu(dx'),\\
f_A(x,t) &= g(t) g(x_{\leq t}\mid t)f_A(x_{>t}\mid x_{\leq t},t). 
\end{align*}
Note the function $f_A$ is the corresponding PDF of $F_A = {A}(G)$.
If the identifying assumption is correct, we have $f_A(x_{> t}\mid x_{\leq t},t) =f(x_{>t}\mid x_{\leq t},t)$.
Namely, the extrapolation density based on the identifying assumption is the same as 
the true extrapolation density. 
This
further implies $f_A(x) = f (x)$ and $F_A(x) = F(x)$.
Because the extrapolation density $f_A(x_{> t}\mid x_{\leq t}, t)$ plays a crucial role in our analysis,
we first derive perturbation theory for $f_A(x_{> t}\mid x_{\leq t}, t)$. 


\begin{theorem}[Perturbation theory of pattern mixture model]
Assume (A1)
and consider 
a donor-based identifying restriction with an identifying set $\assum_{ts}$.
Let $\tilde{g}_n$
be a sequence of observed-data density functions 
that is close to $g$ in the sense that
\begin{equation}
\begin{aligned}
\Delta g(x_{\leq s}\mid T\in \assum_{ts})
&\equiv \tilde g_n(x_{\leq s}\mid T\in \assum_{ts}) -  g(x_{\leq s}\mid T\in \assum_{ts})\rightarrow 0,\\
\Delta g(x_{< s}\mid T\in \assum_{ts})
&\equiv \tilde g_n(x_{< s}\mid T\in \assum_{ts}) -  g(x_{< s}\mid T\in \assum_{ts})\rightarrow 0,\\
\Delta g(t) &\equiv \tilde{g}_n(t) - g(t)\rightarrow 0
\end{aligned}
\label{eq::limit}
\end{equation}
uniformly for all $x\in\mcX$ and $t=1,\cdots, d$. 
Define 
$$
\Delta {f}_A(x_{> t}\mid x_{\leq t}, t) \equiv \tilde {f}_A(x_{> t}\mid x_{\leq t}, t)- {f}_A(x_{> t}\mid x_{\leq t}, t).
$$
Then 
\begin{align*}
\frac{\Delta {f}_A(x_{> t}\mid x_{\leq t}, t) }{f_A(x_{> t}\mid x_{\leq t}, t)} &= 
\sum_{s=t+1}^d \left\{\frac{\Delta g(x_{\leq s}\mid T\in \assum_{ts})}{g(x_{\leq s}\mid T\in \assum_{ts})}
+ \frac{\Delta g(x_{< s}\mid T\in \assum_{ts})}{ g(x_{< s}\mid T\in \assum_{ts})}\right\}
+\tilde{W}_n(x,t),
\end{align*}
where $\sup_x \left|\frac{\tilde{W}_n(x,t)}{\Delta {f}_A(x_{> t}\mid x_{\leq t}, t)} \right|\rightarrow 0$.
\label{thm::pdf}
\end{theorem}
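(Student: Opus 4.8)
The plan is to reduce the extrapolation density to a product of marginal donor densities and then linearize. Starting from the monotone factorization \eqref{eq:product} together with the donor restriction \eqref{eq:monotone_donor}, I would write
\[
f_A(x_{>t}\mid x_{\leq t},t)=\prod_{s=t+1}^{d} g(x_{s}\mid x_{<s}, T\in\assum_{ts})
=\prod_{s=t+1}^{d}\frac{g(x_{\leq s}\mid T\in\assum_{ts})}{g(x_{<s}\mid T\in\assum_{ts})},
\]
and the same expression for $\tilde f_A$ with $\tilde g_n$ in place of $g$. In this form each conditional factor is a ratio whose numerator and denominator are exactly the two marginal donor densities appearing in \eqref{eq::limit}; note that the conditional extrapolation density depends only on these conditional donor densities and not on $g(t)$, so the $\Delta g(t)$ hypothesis in \eqref{eq::limit} is carried along for the companion joint-density statements rather than for the identity proved here.

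Next I would pass to the multiplicative relative error $\tilde f_A/f_A$, which is a product over $s$ of factors $(1+a_s)(1+b_s)^{-1}$, where $a_s$ and $b_s$ denote the relative perturbations $\Delta g(x_{\leq s}\mid T\in\assum_{ts})/g(x_{\leq s}\mid T\in\assum_{ts})$ and $\Delta g(x_{<s}\mid T\in\assum_{ts})/g(x_{<s}\mid T\in\assum_{ts})$ of the numerator and denominator densities. Here Assumption (A1) does the essential work: since the full-data density is bounded away from $0$ on $\mcX$, the marginal donor densities $g(x_{\leq s}\mid\cdot)$ and $g(x_{<s}\mid\cdot)$ are uniformly bounded below, so together with the uniform convergence in \eqref{eq::limit} the quantities $a_s,b_s$ are well defined and uniformly small for large $n$; in particular the $\tilde g_n$-denominators stay bounded away from $0$, so $\tilde f_A$ is itself well defined. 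The cleanest route is then to take logarithms, expand each $\log(1+a_s)$ and $\log(1+b_s)$ to first order, and re-exponentiate: this turns the product into the sum over $s$ of the relative perturbations of the numerator and denominator densities of each conditional factor — precisely the two terms inside the braces of the stated expansion — with everything of quadratic or higher order in the $a_s,b_s$ absorbed into $\tilde W_n(x,t)$. A direct expansion of the product $(1+a_s)(1+b_s)^{-1}$ works equally well and yields the same bookkeeping.

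The remaining and most delicate step is the uniform control of the remainder, namely $\sup_x|\tilde W_n(x,t)/\Delta f_A(x_{>t}\mid x_{\leq t},t)|\to 0$. Writing $\epsilon_n$ for the uniform sup of the relative perturbations, the construction gives $|\tilde W_n|\le C\,\epsilon_n^2\, f_A$ uniformly, while the linear term contributes an amount of order $\epsilon_n\, f_A$ to $\Delta f_A$, so generically the ratio is $O(\epsilon_n)\to 0$. The hard part is that this is a statement about the supremum, and $\Delta f_A$ can be small at points where the leading sum nearly cancels; one must therefore argue that each higher-order monomial collected into $\tilde W_n$ carries an extra factor of $\epsilon_n$ relative to the matching first-order piece, so that the remainder is genuinely $o$ of the linear term in sup-norm and not merely pointwise. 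Securing this uniform domination — using (A1) to keep all denominators bounded below and \eqref{eq::limit} to make the perturbations uniformly negligible — is where I expect the main effort to lie; the algebraic expansion itself is routine.
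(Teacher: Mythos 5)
Your proposal follows essentially the same route as the paper's proof: factor the extrapolation density as $\prod_{s>t} g(x_s\mid x_{<s},T\in\assum_{ts})$, write each conditional as the ratio $g(x_{\leq s}\mid\cdot)/g(x_{<s}\mid\cdot)$, linearize in the perturbations (the paper does this in two stages, first across the product and then within each ratio, while you do it in one pass; the bookkeeping is identical), and absorb all higher-order terms into $\tilde W_n$ using (A1) to keep denominators bounded below. One caution: a first-order expansion of $(1+a_s)(1+b_s)^{-1}$ gives $a_s-b_s$, i.e.\ a \emph{minus} sign between the two relative perturbations, which is exactly what the paper's own proof derives; the plus sign in the theorem statement appears to be a typo, so your claim that the expansion reproduces ``precisely the two terms inside the braces of the stated expansion'' is off by that sign. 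You are also right that the uniform control of $\tilde W_n/\Delta f_A$ is the delicate point, since the linear term can nearly cancel at some $x$; the paper simply asserts this step, so your flagging of it as requiring a genuine argument is a fair criticism of both your sketch and the original proof.
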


Theorem~\ref{thm::pdf} shows the linear perturbation theory for $f_{A}(x_{> t}\mid x_{\leq t},t)$:
when the observed-data distribution is slightly perturbed, 
the corresponding extrapolation distribution is also slightly perturbed,
and the perturbation in the extrapolation distribution is of a linear order
to the perturbation in the observed-data distribution.  
Using this theorem, we can derive the asymptotic theory for $\hat{F}_{A,h}$.

\begin{theorem}
Assume (A1), (K1-2), $h\rightarrow 0$ and $\frac{\log n}{nh^d}\rightarrow 0$.
Then for a pattern-mixture model with a donor-based identifying restriction $A$,
\begin{align*}
\hat{F}_{A,h}(x,t) - \bar {F}_{A,h}(x,t) &=  O_P\left(\sqrt{\frac{1}{n}}\right)\\
\bar {F}_{A,h}(x,t) -F_A(x,t) &=  O(h^2)
\end{align*}
for each $t=1,\cdots, d$.
Moreover, under the same assumptions,
\begin{align*}
\hat{f}_{A,h}(x,t) - \bar {f}_{A,h}(x,t) &=  O_P\left(\sqrt{\frac{1}{nh^d}}\right)\\
\bar {f}_{A,h}(x,t) -f_A(x,t) &=  O(h^2)
\end{align*}
for each $t=1,\cdots, d$.
Thus, if the identifying assumption $A$ is correct, then we immediately have
$$
\hat{F}_{A,h}(x,t) - F(x,t) =  O(h^2) +O_P\left(\sqrt{\frac{1}{n}}\right), 
\quad\hat{f}_{A,h}(x,t) - {f}(x,t) =  O(h^2) + O_P\left(\sqrt{\frac{1}{nh^d}}\right)
$$
for each $t=1,\cdots, d$.

\label{thm::kde}
\end{theorem}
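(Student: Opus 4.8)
The plan is to read the four displayed bounds as a bias--variance decomposition of $\hat F_{A,h}$ and $\hat f_{A,h}$ around the ``expected-KDE'' objects $\bar F_{A,h}$ and $\bar f_{A,h}$ defined in \eqref{eq::exp}: the two difference terms with $\bar{\,\cdot\,}$ on the left are bias terms, controlled through the perturbation theorem together with standard KDE bias estimates, while the two terms comparing $\hat{\,\cdot\,}$ to $\bar{\,\cdot\,}$ are stochastic terms, controlled by empirical-process arguments. The final combined rates then follow from the triangle inequality, since a correct identifying assumption gives $F_A=F$ and $f_A=f$.

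For the two bias terms I would start from the fact that $\bar F_{A,h}$ and $F_A$ share the true observed-data law $G$ in the observed coordinates, so that
\[
\bar F_{A,h}(x,t)-F_A(x,t)=\int_{-\infty}^{x_{\le t}}\bigl[\bar F_{A,h}(x_{>t}\mid x_{\le t}',t)-F_A(x_{>t}\mid x_{\le t}',t)\bigr]\,G(dx_{\le t}',t),
\]
reducing matters to the conditional extrapolation distributions. Here I would apply Theorem~\ref{thm::pdf} with $\tilde g_n=\bar g_h=\E(\hat g_h)$: the perturbations $\Delta g$ in that expansion are exactly the KDE biases $\bar g_h-g$, which are $O(h^2)$ uniformly by (A1) (bounded second derivatives of $f_0$, hence of $g$) and (K1) (finite second kernel moment). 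Since (A1) keeps every conditional density $g(\,\cdot\mid T\in\assum_{ts})$ bounded away from zero, the relative perturbation, and hence $\Delta f_A$, is $O(h^2)$; integrating over $x_{>t}$ and then against the probability (sub)measure $G(\cdot,t)$ preserves this rate. The density bias $\bar f_{A,h}-f_A=O(h^2)$ follows by the same perturbation bound combined with the $O(h^2)$ KDE bias of the observed-data factor $\bar g_h(x_{\le t}\mid t)-g(x_{\le t}\mid t)$, expanding the product $g(t)\,\bar g_h(x_{\le t}\mid t)\,\bar f_{A,h}(x_{>t}\mid x_{\le t},t)$ factor by factor and using boundedness of each factor.

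For the stochastic CDF term I would write $\hat F_{A,h}-\bar F_{A,h}=\mathrm{(I)}+\mathrm{(II)}$, where (II) is the integral of the fixed bounded map $x_{\le t}'\mapsto I(x_{\le t}'\le x_{\le t})\,\bar F_{A,h}(x_{>t}\mid x_{\le t}',t)$ against $\hat G(\cdot,t)-G(\cdot,t)$, and (I) collects the effect of replacing $\bar F_{A,h}(\cdot\mid X_{i,\le t},t)$ by its estimate $\hat F_{A,h}(\cdot\mid X_{i,\le t},t)$ in \eqref{eq::cdf}. Term (II) is $O_P(1/\sqrt n)$ by the CLT pointwise, and uniformly in $x$ since the indexing class (monotone indicators composed with a uniformly bounded function) is a bounded Donsker class. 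For (I) I would again invoke Theorem~\ref{thm::pdf}, now with $\tilde g_n=\hat g_h$, to linearize $\hat f_{A,h}-\bar f_{A,h}$ in the centered KDEs $\hat g_h-\bar g_h$. The essential point is that (I) integrates this density-level fluctuation over the missing coordinates $x_{>t}$: in each linearized summand the conditional-density factors other than the ``active'' one integrate to approximately one, while the active factor, integrated over its own coordinate, turns the kernel into its CDF $\mathbb{K}$, so the whole summand becomes an average of uniformly bounded functions --- an empirical-process term of order $O_P(1/\sqrt n)$ rather than the slower density rate. The VC-type property (K2) supplies uniform-in-$x$ control.

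Finally, the stochastic density term $\hat f_{A,h}-\bar f_{A,h}=O_P(1/\sqrt{nh^d})$ uses the same linearization, but now with no integration to cancel the smoothing: the leading term is a centered $d$-dimensional KDE fluctuation, whose pointwise standard deviation is $O_P(1/\sqrt{nh^d})$ (with the uniform version $O_P(\sqrt{\log n/(nh^d)})$ available under (K2) and $\log n/(nh^d)\to0$). The main obstacle throughout is term (I): through the weights $W_i$ in \eqref{eq:W_i} and the conditional densities in \eqref{eq:est_monotone_donor}, the extrapolation distribution is a product of ratios of kernel estimators, so its stochastic behaviour is genuinely nonlinear, and the crux is to show --- via the perturbation expansion of Theorem~\ref{thm::pdf} followed by integration over $x_{>t}$ --- that integrating out the missing coordinates degrades the $1/\sqrt{nh^d}$ density fluctuation all the way down to the parametric $1/\sqrt n$ CDF rate, uniformly in $x$.
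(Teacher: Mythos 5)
Your proposal follows essentially the same route as the paper's proof: the same decomposition of $\hat F_{A,h}-\bar F_{A,h}$ into an IID empirical-process part and a part driven by the estimated extrapolation distribution, the same double use of Theorem~\ref{thm::pdf} (with $\tilde g_n=\bar g_h$ for the $O(h^2)$ bias and $\tilde g_n=\hat g_h$ for the stochastic fluctuation), and the same key observation that integrating the linearized KDE fluctuation over the missing coordinates upgrades the density rate $O_P(1/\sqrt{nh^d})$ to the parametric CDF rate $O_P(1/\sqrt{n})$. The argument is correct at the same level of rigor as the paper's.
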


The first part of Theorem~\ref{thm::kde} describes the convergence rate
of $\hat{F}_{A,h}$. 
Unlike the case of estimating the density function, estimating
the cumulative density function (CDF) has a $\sqrt{n}$ convergence rate even if it is built by integrating
a KDE. This result is known in the literature, see, e.g., \cite{reiss1981nonparametric,liu2008kernel}.


The estimation rate in
Theorem~\ref{thm::kde} implicitly uses a decomposition of the uncertainty:
\begin{equation}
\begin{aligned}
\hat{F}_{A,h}(x,t) - F(x,t) = \underbrace{\hat{F}_{A,h}(x,t) - \bar {F}_{A,h}(x,t)}_\text{Stochastic error} +\underbrace{\bar F_{A,h}(x,t) - F_A(x,t)}_\text{Smoothing/model bias} + \underbrace{F_A(x,t) -F(x,t)}_\text{Restriction bias}.
\end{aligned}
\end{equation}
The stochastic error comes from the sampling variability. 
Confidence intervals are often designed to capture this type of uncertainty 
(later we will see this more explicitly).
The smoothing/model bias comes from the bias of our model. 
When using a KDE, it is of the order $O(h^2)$. 
We can reduce its effect by choosing a small smoothing bandwidth (undersmoothing).
The restriction bias comes from misspecification of the identifying restriction. 

With the convergence rate of $\hat{F}_{A,h}$, 
the rate for estimating $\theta_{A,h} = \theta(F_{A,h})$
is 
$$
\hat{\theta}_{A,h} - \theta_{A,h} = O_P\left(\sqrt{\frac{1}{n}}\right),
$$
when $\theta$ is a smooth functional (such as being Hadamard differentiable, e.g., \citealt{van1998asymptotic}). 

In addition to the convergence rate, the estimated distribution function $\hat{F}_{A,h}$
also has a beautiful asymptotic behavior as illustrated in the following theorem.

\begin{theorem}
Assume (A1), (K1-2), $h\rightarrow 0$ and $\frac{\log n}{nh^d}\rightarrow 0$.
Then for each $t=1,\cdots, d$,
$\sqrt{n}\left(\hat{F}_{A,h}(\cdot, t) - \bar{F}_{A,h}(\cdot, t)\right)$
converges weakly to a Brownian bridge in $L_\infty$ norm.
\label{thm::uclt}
\end{theorem}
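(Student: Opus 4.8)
The plan is to show that, uniformly in $x$, the process $\sqrt n\big(\hat F_{A,h}(\cdot,t)-\bar F_{A,h}(\cdot,t)\big)$ is asymptotically an empirical process of a single fixed influence function indexed by $x$, and then to invoke the functional central limit theorem for a Donsker class. First I would use \eqref{eq::cdf} and \eqref{eq::exp} to write the decomposition
\[
\sqrt n\big(\hat F_{A,h}(x,t)-\bar F_{A,h}(x,t)\big)
=\underbrace{\sqrt n\Big(\tfrac1n\textstyle\sum_i \psi_x(X_{i,\leq t},T_i)-\E[\psi_x]\Big)}_{(\mathrm{I})}
+\underbrace{\sqrt n\,\tfrac1n\textstyle\sum_i D_h(X_{i,\leq t})\,I(X_{i,\leq t}\leq x_{\leq t},T_i=t)}_{(\mathrm{II})},
\]
where $\psi_x(u,s)=\bar F_{A,h}(x_{>t}\mid u,t)\,I(u\leq x_{\leq t},s=t)$ satisfies $\E[\psi_x]=\bar F_{A,h}(x,t)$, and $D_h(u)=\hat F_{A,h}(x_{>t}\mid u,t)-\bar F_{A,h}(x_{>t}\mid u,t)$ is the stochastic fluctuation of the estimated extrapolation distribution. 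Term $(\mathrm{I})$ is already an empirical process of a fixed function, so the entire burden is to linearize $(\mathrm{II})$.

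For $(\mathrm{II})$ I would split $\tfrac1n\sum_i D_h(X_{i,\leq t})\,I(\cdots)=\int D_h(u)\,I(u\leq x_{\leq t})\,G(du,t)+\big(\tfrac1n\sum_i-\int\big)[D_h\,I]$. The second piece is an empirical process of the \emph{random} function $D_h\,I$; since Theorem~\ref{thm::kde} gives $\sup_u|D_h(u)|=O_P((nh^d)^{-1/2})$ and the indicators form a VC class while $\mathcal K$ is VC-type by (K2), a maximal inequality bounds it by $O_P(n^{-1/2}(nh^d)^{-1/2})$, so $\sqrt n$ times it is $O_P((nh^d)^{-1/2})=o_P(1)$ uniformly, using $\tfrac{\log n}{nh^d}\to0$. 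For the first piece I would apply the perturbation expansion of Theorem~\ref{thm::pdf} to write $D_h$ as a linear functional of $\hat g_h-\bar g_h$; because $\hat g_h-\bar g_h=\tfrac1n\sum_j[K_h(\cdot;X_j)-\E K_h(\cdot;X)]$ is a centered kernel average, integrating it against the smooth weights arising in Theorem~\ref{thm::pdf} collapses each convolution to its $h\to0$ limit and yields $\int D_h(u)\,I(u\leq x_{\leq t})\,G(du,t)=\tfrac1n\sum_j\chi_x(X_j)-\E[\chi_x]+o_P(n^{-1/2})$ for an explicit bounded $\chi_x$ built from the donor-conditional densities. This is exactly the step in which the $O(h^2)$ smoothing bias is annihilated by integration and the CDF attains the $\sqrt n$ rate.

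Combining the two steps gives $\sqrt n\big(\hat F_{A,h}(x,t)-\bar F_{A,h}(x,t)\big)=\sqrt n\big(\tfrac1n\sum_i\phi_x(X_{i,\leq t},T_i)-\E[\phi_x]\big)+o_P(1)$ uniformly in $x$, with $\phi_x=\psi_x+\chi_x$. I would then verify that $\{\phi_x:x\in\mcX\}$ is Donsker: the indicator factors generate a VC class of sets, $\bar F_{A,h}(x_{>t}\mid\cdot,t)$ and $\chi_x$ are uniformly bounded and of bounded variation in $x$, and finite sums and products of such classes remain Donsker, while the residual $h$-dependence of $\bar F_{A,h}$ is a vanishing perturbation absorbed by asymptotic equicontinuity. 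The functional central limit theorem then gives weak convergence in $\ell^\infty(\mcX)$ to a tight mean-zero Gaussian process; computing the covariance of $\phi_x$ and using that $\hat F_{A,h}$ is finite-sample nonparametrically saturated and consistent for the genuine distribution function $F_A(\cdot,t)$, the combined influence function collapses to the pinned form with covariance $F_A(x\wedge y,t)-F_A(x,t)F_A(y,t)$, i.e.\ an $F_A$-Brownian bridge (the $F$-Brownian bridge when $A$ is correct).

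The hard part will be the uniform-in-$x$ treatment of $(\mathrm{II})$: one must show simultaneously that the density-level error, which is only $O_P((nh^d)^{-1/2})$ and would blow up if bounded naively after multiplication by $\sqrt n$, contributes at the integrated level only a genuine $\sqrt n$-empirical-process term with no $h^{-d/2}$ inflation. Making this rigorous requires coupling the perturbation expansion of Theorem~\ref{thm::pdf} with empirical-process maximal inequalities over the bandwidth-dependent class $\mathcal K$, and it is precisely here that assumption (K2) and the rate $\tfrac{\log n}{nh^d}\to0$ are indispensable; the accompanying bookkeeping needed to show that the donor-pattern contributions assemble into the single bounded influence function $\chi_x$, and that the resulting covariance simplifies to the Brownian-bridge form, is the most delicate part of the argument.
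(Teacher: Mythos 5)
Your decomposition is essentially the paper's: your term $(\mathrm{I})$ is the paper's empirical process of the fixed class $\{\bar F_{A,h}(x_{>t}\mid y,t)I(y\leq x_{\leq t})\}$, your $\int D_h\,I\,G(du,t)$ is the paper's term $(\mathrm{II})$ built from $\Delta_{2,n}$ via Theorem~\ref{thm::pdf}, and your empirical fluctuation of the random function $D_h\,I$ is the paper's cross term $(\mathrm{III})$, killed there by the DKW inequality together with the $O_P(\sqrt{\log n/n})$ rate for the smoothed conditional CDF. The one genuine difference is in how the main extrapolation term is finished: the paper keeps it as a smoothed empirical process $\tilde{\mathbb{G}}_s(g_x)=\sqrt n\int g_x(y)\left(\hat g_h(y\mid T\in\assum_{ts})-\bar g_h(y\mid T\in\assum_{ts})\right)dy$ over the Donsker class $\mathcal{G}_{t,s,h}=\mathcal H\times\mathcal Q_{s,t,h}$ and invokes the uniform CLT for smoothed empirical processes (Theorem 2 of Gin\'e--Nickl), whereas you linearize one step further, collapsing the kernel convolution to its $h\to0$ limit and producing a single unsmoothed influence function $\chi_x$ before applying the ordinary Donsker theorem. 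The two routes are equivalent in substance --- your ``collapse the convolution with $o_P(n^{-1/2})$ error, uniformly in $x$'' step is exactly the content of the smoothed-empirical-process CLT, so you are re-proving rather than citing it --- and both require (K2) and $\log n/(nh^d)\to0$ at the same place. One concrete inaccuracy in your last paragraph: the limiting covariance is \emph{not} $F_A(x\wedge y,t)-F_A(x,t)F_A(y,t)$. The influence function $\phi_x=\psi_x+\chi_x$ is not the indicator $I(X\leq x,\,T=t)$; in particular $\E[\psi_x^2]=\E\bigl[\bar F_{A,h}(x_{>t}\mid X_{\leq t},t)^2 I(X_{\leq t}\leq x_{\leq t},T=t)\bigr]\neq F_A(x,t)$, and the donor-set terms $\chi_x$ contribute additional variance. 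The limit is a generalized ($P$-indexed) Brownian bridge with covariance $\E[\phi_x\phi_y]-\E[\phi_x]\E[\phi_y]$, which is all the theorem claims; asserting the classical pinned covariance overstates the conclusion.
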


Theorem~\ref{thm::uclt} shows that the difference between the estimated CDF $\hat{F}_{A,h}$
and its population counterpart $\bar{F}_{A,h}$
forms a Brownian bridge.
This implies several powerful results. 
For instance, for any given point $x$ and pattern $t$, $\hat{F}_{A,h}(x,t) - \bar{F}_{A,h}(x,t)$
has asymptotic normality. 
For any Hadamard differentiable functional $\theta(\cdot)$,
$\hat{\theta}_{A,h}-\bar\theta_{A,h}$ also has asymptotic normality (see Corollary \ref{cor::linear}).

We now provide a high-level idea of the proof of Theorem~\ref{thm::uclt}.
Recall that the CDF estimator can be written as an integral of the PDF estimator. 
Because of Theorem~\ref{thm::pdf}, the difference $\hat{F}_{A,h} - \bar{F}_{A,h}$
can be rewritten as
the difference between an integrated KDE and a CDF.
Then, we apply the uniform central limit theorem of a smoothed empirical process 
(the main theorem of \cite{van1994weak} or Theorem 2 of \cite{gine2008uniform}),
which implies that the difference between an integrated KDE and its expectation (a CDF)
converges uniformly to a Brownian bridge
and the result follows.



\subsection{Theory of Monte Carlo Estimate}

Now we show that the Monte Carlo approach of Algorithm \ref{alg::MC} indeed 
generates points from the estimated full-data distribution. 

\begin{theorem}
The observations drawn using Algorithm~\ref{alg::MC} are such that 
$$
X_{i,>T_i}^{(1)},\cdots,X_{i,>T_i}^{(V)} \overset{IID}{\sim} \hat{F}_{A,h}(x_{>T_i}\mid X_{i,\leq T_i}, T_i).
$$

\label{thm::MC_icin}
\end{theorem}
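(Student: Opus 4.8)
The plan is to show that Algorithm~\ref{alg::MC} is exactly a sequential composition-of-kernels sampler for the factorized extrapolation density in \eqref{eq:est_product}, so that the chain rule reconstructs the target conditional density. First I would fix $v$ and $i$, write $t=T_i$, and read off from the algorithm the conditional law of each generated coordinate. At the inner step $s\in\{t+1,\dots,d\}$, conditionally on the already-determined block $X^{(v)}_{<s}=(X_{i,\leq t},X^{(v)}_{t+1},\dots,X^{(v)}_{s-1})$, the algorithm draws a donor index $\ell$ with probability $W_\ell(X^{(v)}_{<s};\assum_{ts})$ and then draws $X^{(v)}_s\sim K_s(\,\cdot\,;X_{\ell s},h_s)$.

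The key step is to marginalize over the latent donor index. By the law of total probability, the conditional density of $X^{(v)}_s$ given $X^{(v)}_{<s}$ produced by these two draws is $\sum_{\ell=1}^{n} W_\ell(X^{(v)}_{<s};\assum_{ts})\,K_s(x_s;X_{\ell s},h_s)$, which by \eqref{eq:gmixture} and \eqref{eq:est_monotone_donor} is exactly $\hat{g}_h(x_s\mid X^{(v)}_{<s},T\in\assum_{ts})=\hat{f}_{\assum,h}(x_s\mid X^{(v)}_{<s},T=t)$. In other words, each inner iteration is the standard two-stage way of sampling from a finite mixture, and the mixture it targets is precisely the $s$-th sequential conditional in the factorization.

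Next I would assemble these one-step conditionals via the chain rule. Since the algorithm produces $X^{(v)}_{>t}$ coordinate by coordinate, the joint conditional density of $X^{(v)}_{>t}$ given $X_{i,\leq t}$ equals $\prod_{s=t+1}^{d}\hat{f}_{\assum,h}(x_s\mid x_{<s},T=t)$, which by \eqref{eq:est_product} is $\hat{f}_{\assum,h}(x_{>t}\mid x_{\leq t},T=t)$, the density of $\hat{F}_{A,h}(\,\cdot\mid X_{i,\leq t},t)$ by \eqref{eq::extra}. This shows each drawn $X^{(v)}_{i,>T_i}$ has the asserted marginal law. Finally, independence across $v=1,\dots,V$ follows because the outer loop reuses only the fixed observed data $(X_{i,\leq T_i},T_i)$ together with fresh, mutually independent randomization variables (for both the index draw and the kernel draw) at each $v$; conditionally on the data these $V$ constructions are identical measurable functions of independent noise, hence IID.

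I expect the only real obstacle to be notational bookkeeping: one must be careful that the mixing weights $W_\ell$ at step $s$ are themselves functions of the partially imputed vector $X^{(v)}_{<s}$, so the one-step conditionals must be stated conditionally on all previously generated coordinates before the chain rule is applied. Verifying that the normalizing constant implicit in the ``$\propto$'' of \eqref{eq:W_i} is exactly the denominator of the conditional KDE $\hat{g}_h(x_s\mid x_{<s},T\in\assum_{ts})$ --- obtained by integrating $x_s$ out of $\hat{g}_h(x_{\leq s},T\in\assum_{ts})$ using $\int K_s\,\mu(dx_s)=1$ --- closes this gap and confirms the weights are genuine probabilities.
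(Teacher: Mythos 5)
Your proposal is correct and follows essentially the same route as the paper's proof: you identify each inner iteration as a two-stage sampler for the mixture $\sum_{\ell} W_\ell(x_{<s};\assum_{ts})K_s(\cdot;X_{\ell s},h_s)=\hat{f}_{A,h}(\cdot\mid x_{<s},T=t)$ and then assemble the sequential conditionals into the product \eqref{eq:est_product}, which the paper does via an explicit induction on $s$. Your added check that the normalization in \eqref{eq:W_i} matches the denominator of the conditional KDE is a sensible detail that the paper leaves implicit.
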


Here is an intuitive explanation of why Theorem~\ref{thm::MC_icin}
works. 
Equations \eqref{eq:est_monotone_donor}
and \eqref{eq:gmixture}
imply that given $T=t$ and any $s>t$, 
the extrapolation density 
$$
\hat{f}_{A,h}(x_s\mid x_{<s}, T=t)= \hat{g}_h(x_s\mid x_{<s}, T\in\assum_{ts})
= \sum_{i=1}^n W_i(x_{<s};\assum_{ts}) K_s(x_s;X_{is}, h_s),
$$
which can be viewed as a density mixture such that 
with a probability of  $W_i(x_{<s};\assum_{ts})$,
we will sample from the density $K_s(x_s;X_{is}, h_s)$. 
Essentially, Algorithm~\ref{alg::MC} is following this feature of a density mixture
for each $s=t+1,t+2,\cdots, d$,
so the imputed observations are IID from the desired distribution.

\subsection{Bootstrap Theory}

We now discuss the validity of bootstrap confidence bands and intervals. 
We first introduce a corollary showing that the estimator
$\hat{\theta}_{A,h}$ is asymptotically normal.



%

\begin{corollary}
Assume (A1-2), (K1-2), and $h\rightarrow0, \frac{\log n}{nh^d}\rightarrow 0$. 
Then 
$$
\sqrt{n}(\hat{\theta}_{A ,h} - \bar \theta_{A,h}) \overset{D}{\rightarrow} N(0,\sigma^2),
$$
for a constant $\sigma^2>0$.
\label{cor::linear}
\end{corollary}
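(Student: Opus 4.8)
The plan is to derive the corollary as an application of the functional delta method to the weak-convergence statement of Theorem~\ref{thm::uclt}. That theorem establishes that, for each fixed pattern $t$, the process $\G_n \equiv \sqrt{n}\bigl(\hat{F}_{A,h}(\cdot,t) - \bar{F}_{A,h}(\cdot,t)\bigr)$ converges weakly in $L_\infty$ to a tight mean-zero Gaussian process $\mathbb{B}$ (the Brownian bridge). Assumption (A2) supplies the second ingredient: $\theta$ is Hadamard differentiable, with a continuous linear derivative which I will denote $\theta'_{F_A}$. The skeleton of the argument is then: write $\hat{F}_{A,h} = \bar{F}_{A,h} + \G_n/\sqrt{n}$, perform a first-order expansion of $\theta$ around $\bar{F}_{A,h}$ in the random direction $\G_n$, and pass to the limit, obtaining $\sqrt{n}\bigl(\theta(\hat{F}_{A,h}) - \theta(\bar{F}_{A,h})\bigr) \rightsquigarrow \theta'_{F_A}(\mathbb{B})$.

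Concretely, I would first note that, since $\theta'_{F_A}$ is linear and continuous and $\G_n \rightsquigarrow \mathbb{B}$, the continuous mapping theorem yields $\theta'_{F_A}(\G_n) \rightsquigarrow \theta'_{F_A}(\mathbb{B})$. Next I would control the linearization remainder
\[
r_n \equiv \sqrt{n}\bigl(\theta(\hat{F}_{A,h}) - \theta(\bar{F}_{A,h})\bigr) - \theta'_{F_A}(\G_n)
\]
and show $r_n = o_P(1)$. Combining the two via Slutsky then gives $\sqrt{n}(\hat\theta_{A,h} - \bar\theta_{A,h}) = \theta'_{F_A}(\G_n) + o_P(1) \rightsquigarrow \theta'_{F_A}(\mathbb{B})$.

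The main obstacle is the control of $r_n$, because the centering $\bar{F}_{A,h}$ is a \emph{drifting} sequence: it depends on $n$ through the bandwidth $h=h_n$ and only satisfies $\bar{F}_{A,h} \to F_A$ rather than equaling a fixed distribution. Plain Hadamard differentiability at the single point $F_A$ is calibrated for a fixed centre, so to expand around the moving point $\bar{F}_{A,h}$ I would invoke a locally uniform (\emph{delta method for sequences}) version of Hadamard differentiability, in the spirit of the treatment in \citet{van1998asymptotic}: for any $t_n \downarrow 0$, base points $F_n \to F_A$, and directions $g_n \to g$ lying in the support of $\mathbb{B}$, one needs $(\theta(F_n + t_n g_n) - \theta(F_n))/t_n \to \theta'_{F_A}(g)$. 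Feeding in $F_n = \bar{F}_{A,h}$, $t_n = 1/\sqrt{n}$, and $g_n = \G_n$—after an almost-sure (Skorokhod) representation that turns the weak convergence $\G_n\rightsquigarrow\mathbb{B}$ into convergence of realizations—yields $r_n = o_P(1)$. Here I would also use that $\mathbb{B}$ concentrates on the separable subspace of $L_\infty$ tangentially to which $\theta$ is differentiable, which is precisely where the tightness and sample-path regularity guaranteed by Theorem~\ref{thm::uclt} do the heavy lifting.

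Finally, the limit $\theta'_{F_A}(\mathbb{B})$ is a continuous linear image of a tight, mean-zero Gaussian process, hence a centered univariate Gaussian with variance $\sigma^2 = \operatorname{Var}\bigl(\theta'_{F_A}(\mathbb{B})\bigr)$; nondegeneracy of the influence function $\theta'_{F_A}$ on the support of $\mathbb{B}$ gives $\sigma^2 > 0$, which completes the proof.
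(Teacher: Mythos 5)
Your proposal is correct and follows essentially the same route as the paper, which simply states that the corollary ``follows from the functional delta method and Theorem~\ref{thm::uclt}'' and omits the details. Your additional care in handling the drifting centering $\bar{F}_{A,h}$ via the delta method for sequences is a sound (and welcome) elaboration of the step the paper leaves implicit, not a departure from its approach.
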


This corollary follows from the functional delta method and Theorem~\ref{thm::uclt},
so we omit the proof. 
The crucial assumption here is that the statistical functional $\theta(\cdot)$
is Hadamard differentiable. 
This assumption holds for many common statistical functionals
such as the mean, median, variance, correlation between two variables, etc.
The formal definitions and more details are given in Appendix \ref{sec::Hada}.

With Corollary~\ref{cor::linear},
we derive the validity of the bootstrap confidence interval as follows. 

\begin{theorem}
Under the assumptions (A1-2) and $h\rightarrow0, \frac{\log n}{nh^d}\rightarrow 0$,
$$
\sqrt{n}(\hat{\theta}^*_{A,h} - \hat\theta_{A,h})\overset{D}{\rightarrow} N(0,\sigma^2)
$$
and
$$
\sup_q\left|P\left(\sqrt{n}(\hat{\theta}^*_{A,h} - \hat\theta_{A,h})<q|S_n\right) -P\left(\sqrt{n}(\hat{\theta}_{A ,h} - \bar\theta_{A,h})<q\right)\right| 
\overset{P}{\rightarrow}0.
$$
Thus,
$$
P(\bar \theta_{A,h}\in\hat{C}_{n,\alpha}) = 1-\alpha+o(1).
$$

\label{thm::LBT}
\end{theorem}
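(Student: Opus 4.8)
The plan is to prove a bootstrap analog of Theorem~\ref{thm::uclt} and then push it through the functional $\theta$ by the functional delta method, mirroring the derivation of Corollary~\ref{cor::linear} from Theorem~\ref{thm::uclt}. The central object to establish is a \emph{conditional} uniform central limit theorem: conditionally on $S_n$ and in probability,
$$
\sqrt{n}\left(\hat{F}^*_{A,h}(\cdot,t)-\hat{F}_{A,h}(\cdot,t)\right)
$$
converges weakly in $L_\infty$ to the same Brownian bridge that Theorem~\ref{thm::uclt} identifies as the limit of $\sqrt{n}(\hat{F}_{A,h}(\cdot,t)-\bar{F}_{A,h}(\cdot,t))$. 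Once this is in hand, Assumption (A2) (Hadamard differentiability of $\theta$) together with the delta method for the bootstrap gives $\sqrt{n}(\hat\theta^*_{A,h}-\hat\theta_{A,h})\overset{D}{\rightarrow}N(0,\sigma^2)$ conditionally, with the \emph{same} $\sigma^2$ as in Corollary~\ref{cor::linear}, since both limits are the image of the same Brownian bridge under the same Hadamard derivative. This yields the first display of the theorem.

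To prove the conditional UCLT I would decompose $\hat{F}^*_{A,h}-\hat{F}_{A,h}$ into the contribution of resampling the empirical observed-data distribution $\hat G$ and the contribution of resampling the surrogate kernel estimate $\hat g_h$, and then linearize the extrapolation piece using the perturbation expansion of Theorem~\ref{thm::pdf}. After linearization the problem reduces to a bootstrap central limit theorem for an integrated (smoothed) empirical process indexed by the VC-type class $\mathcal{K}$ of Assumption (K2). Here I would combine the smoothed-empirical-process machinery already used for Theorem~\ref{thm::uclt} (the uniform CLT of van der Vaart and of Gin\'e et al.) with the bootstrap central limit theorem for Donsker classes (Gin\'e and Zinn), checking that the resampled process remains asymptotically tight and equicontinuous uniformly in the bootstrap randomness under (A1), (K1--2) and the bandwidth condition $\log n/(nh^d)\to 0$. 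This bootstrap UCLT for the hybrid, partly smoothed estimator is the \textbf{main obstacle}: unlike the classical empirical-process bootstrap, the estimator is a nonlinear functional of both $\hat G$ and $\hat g_h$, and one must show that the extra smoothing does not inflate the resampling fluctuations beyond the $O_P(n^{-1/2})$ order established in Theorem~\ref{thm::kde}.

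The second display, uniform closeness of the bootstrap law to the sampling law in Kolmogorov distance, then follows from the first. Both $\sqrt{n}(\hat\theta^*_{A,h}-\hat\theta_{A,h})$ (conditionally) and $\sqrt{n}(\hat\theta_{A,h}-\bar\theta_{A,h})$ converge to the continuous law $N(0,\sigma^2)$, and convergence in distribution to a continuous limit is automatically uniform (Pólya's theorem), so the two distribution functions are uniformly within $o_P(1)$ of the common Gaussian distribution function, and hence of each other.

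Finally, for the coverage statement I would invert the percentile construction. Writing the bootstrap quantile as $\hat\Gamma^{-1}(\beta)=\hat\theta_{A,h}+q_{\beta}/\sqrt{n}+o_P(n^{-1/2})$, where $q_\beta$ is the $\beta$-quantile of $N(0,\sigma^2)$ (a consequence of the second display and the uniform quantile-matching it provides), the event $\bar\theta_{A,h}\in\hat{C}_{n,\alpha}$ is asymptotically equivalent to
$$
-q_{1-\alpha/2}\le \sqrt{n}(\hat\theta_{A,h}-\bar\theta_{A,h})\le -q_{\alpha/2}.
$$
Since the $N(0,\sigma^2)$ limit is symmetric about zero, $q_{\alpha/2}=-q_{1-\alpha/2}$, and by Corollary~\ref{cor::linear} the probability of this event converges to $1-\alpha$, yielding $P(\bar\theta_{A,h}\in\hat{C}_{n,\alpha})=1-\alpha+o(1)$.
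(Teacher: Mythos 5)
Your proposal is correct and follows essentially the same route as the paper: the paper's own proof is a two-line appeal to Theorem~\ref{thm::uclt} together with the bootstrap version of the functional delta method (Theorem 23.9 of van der Vaart and Theorem 3.9.11 of van der Vaart and Wellner), which is exactly the conditional-UCLT-plus-delta-method argument you lay out. You supply considerably more detail than the paper does---in particular the conditional UCLT for the hybrid smoothed estimator and the percentile-interval inversion using symmetry of the Gaussian limit, both of which the paper leaves implicit.
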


With the convergence towards a Brownian bridge (Theorem~\ref{thm::uclt}),
this result follows from the Theorem of the bootstrap for the delta method; see, e.g., Theorem 23.9 of \cite{van1998asymptotic}
and Theorem 3.9.11 of \cite{van1996weak}.
Thus, we omit the proof.

Theorem~\ref{thm::LBT} shows that the bootstrap method can recover
the uncertainty of $\hat\theta_{A,h}$, which further leads to the validity
of a bootstrap confidence interval. 
An advantage of using the bootstrap is that there is no need to calculate 
$\sigma^2$. 
If $\sigma^2$ has a closed form and can be estimated, 
we can use the estimated $\sigma^2$ to construct a normal confidence interval
or use the bootstrap $t$-distribution \citep{hall2013bootstrap} to construct another bootstrap confidence interval. 
Even when we do not know $\sigma^2$, we can use the sample variance of the bootstrap estimates
to estimate this quantity and use it to construct a confidence interval.

\subsection{Bootstrap Diagram}	\label{sec::BD}


\begin{figure}[h]
\center
\includegraphics[width=.7\linewidth]{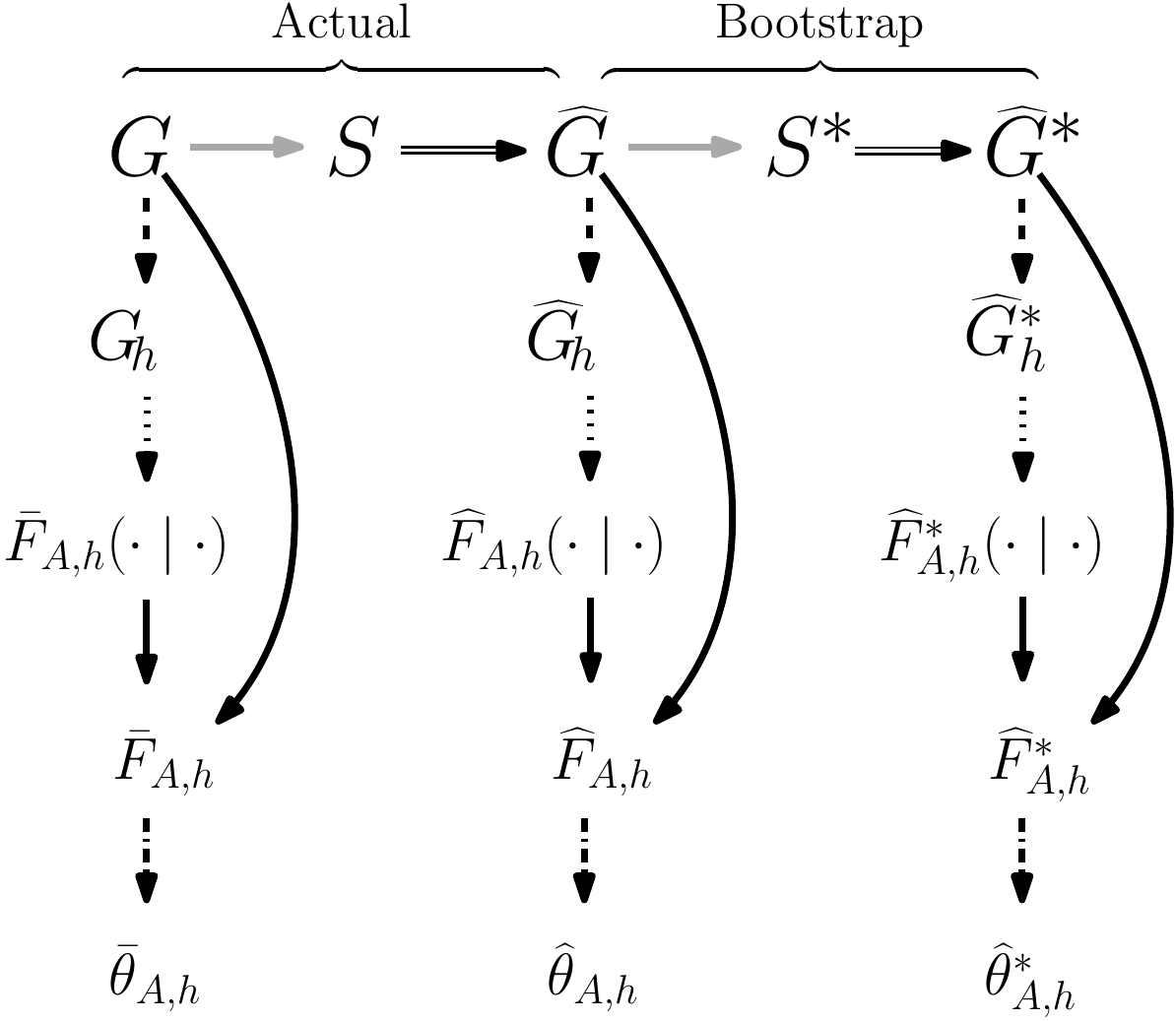}
\caption{Extended Efron's bootstrap diagram.  The following concepts are represented by the arrow types in parenthesis: sampling (solid gray), modeling/density estimation (dashed black), 
extrapolating (dotted black), reconstructing CDF (solid black),
and derivation of parameter of interest (dashed dotted black).}
\label{fig:diagram}
\end{figure}

In Theorem~\ref{thm::LBT}, we see that the bootstrap provides a valid confidence interval
of $\bar{\theta}_{A,h} = \theta(\bar{F}_{A,h})$, the parameter of interest corresponding
to $\bar{F}_{A,h}$ in \eqref{eq::exp}. We now explain why this is the underlying population quantity
 using the concept of \emph{bootstrap diagram} from \cite{Efron_missingdata}.  Figure \ref{fig:diagram} shows an expanded version of Efron's bootstrap diagram to explicitly illustrate the effect of modeling (or smoothing) and the role of identifying assumptions.  Initially we shall think of $\hat G$ as the empirical distribution based on the observed sample $S$, and the observed sample is from the true distribution function $G$.
During our construction of the estimator $\hat{F}_{A,h}$,
we first apply smoothing to $\hat{G}$ to obtain the KDE $\hat{g}_h$
and then combine these densities using the identifying assumption $A$ to obtain an estimate
of the extrapolation distribution
$\hat{F}_{A,h}(\cdot\mid\cdot)$ defined in equation \eqref{eq::extra}.
Thus, we can view the kernel smoothing as a mapping from $\hat{G}$ to $\hat{G}_{h}$,
where $\hat{G}_{h}$ denotes the CDF generated by $\hat{g}_h$. 
%
The estimator of the joint CDF $\hat{F}_{A,h}$ in equation \eqref{eq::cdf}
is constructed using a mapping with two inputs: the extrapolation distribution $\hat{F}_{A,h}(\cdot\mid\cdot)$
and the empirical distribution function $\hat{G}$. 
Therefore, in the bootstrap diagram we use two arrows ($\hat{F}_{A,h}(\cdot\mid\cdot)\rightarrow \hat{F}_{A,h}$
and $\hat{G}\rightarrow\hat{F}_{A,h}$) to denote this. 
The estimate of the parameter of interest $\hat{\theta}_{A,h} = \theta(\hat{F}_{A,h})$
is simply a plug-in estimate so it can be viewed as a mapping from $\hat{F}_{A,h}$
to $\hat{\theta}_{A,h}$.

In the bootstrap process, we are generating observations from the empirical distribution function $\hat{G}$,
so $\hat{G}$ now serves the role of $G$. 
Therefore, the bootstrap sample $S^*$ generates another empirical distribution function $\hat{G}^*$,
which also leads to $\hat{g}^*_h$, the bootstrap estimates of 
the density functions of the observed-data distribution. 
Just like the case of original estimate, $\hat{G}^*_{h}$ is the CDF corresponds
to $\hat{g}^*_h$
and the identifying assumption maps it into the extrapolation distribution $\hat{F}^*_{A,h}(\cdot\mid\cdot)$. 
The extrapolation distribution $\hat{F}^*_{A,h}(\cdot\mid\cdot)$,
together with the bootstrap empirical distribution $\hat{G}^*$,
leads to the CDF estimator $\hat{F}^*_{A,h}$, 
and the bootstrap estimate of the parameter of interest 
$\hat{\theta}^*_{A,h} = \theta(\hat{F}^*_{A,h})$.


So far, we have explained the middle and the right-hand branches of the bootstrap diagram in Figure \ref{fig:diagram}. 
We will now explain the left-hand branch, which will also clarify why the population quantity that
the confidence interval covers is $\bar{\theta}_{A,h} = \theta(\bar{F}_{A,h})$. 
The key step is to notice that there is a mapping between $\hat{G}$ to $\hat{G}_{h}$
and $\hat{G}^*$ to $\hat{G}^*_{h}$ that represents the effect
of smoothing/modeling.  
If we start with $G$, we also need to define a quantity $G_{h}$
that is a quantity based on smoothing $G$.
Thus, it is not hard to see that $G_{h}$ denotes the CDF corresponding
to the smoothed densities $g_h = \E(\hat{g}_h)$.
Then the identifying assumption $A$ maps $G_{h}$ to an extrapolation distribution $\bar{F}_{A,h}(\cdot\mid\cdot)$
defined in equation \eqref{eq::extra2}. 
The quantity $\bar{F}_{A,h}$ in equation \eqref{eq::exp}
is constructed using the extrapolation distribution $\bar{F}_{A,h}(\cdot\mid\cdot)$ and
the observed-data distribution $G$
so again we have two arrows from $\bar{F}_{A,h}(\cdot\mid\cdot)$ and $G$
to $\bar{F}_{A,h}$. 
With the full-data distribution $\bar{F}_{A,h}$, we can view the corresponding parameter of interest,
$\bar{\theta}_{A,h} = \theta(\bar F_{A,h}) $,
as a mapping from $\bar{F}_{A,h}$ to $\bar{\theta}_{A,h} ,$
which is represented by the dotted dashed arrow at the bottom of Figure \ref{fig:diagram}. 
It is now easy to see why $\bar{\theta}_{A,h}$ 
is the parameter of interest that the bootstrap confidence interval is covering -- the bootstrap 
difference
$\hat{\theta}^*_{A,h} -\hat{\theta}_{A,h}$
is mimicking the difference
$\hat{\theta}_{A,h} -\bar{\theta}_{A,h}$. 

Although Figure~\ref{fig:diagram} displays the bootstrap diagram
when we are using the KDE in the modeling stage, 
this diagram can be generalized to an arbitrary density estimator or modeling method. 
If we use a model $\mathcal{M}$ that maps $\hat G$ into a density estimator $\hat{g}_{\mathcal{M}}$
with a CDF estimator $\hat{G}_{\mathcal{M}}$,
the diagram remains the same except that we replace every element with a subscript $h$ 
by a subscript $\mathcal{M}$.
For example, one could model the observed-data density functions using Gaussians as in \cite{Little93}, in which case
 each $\hat{g}_{\mathcal{M}}(x_{ r}\mid r)$ will correspond to a fitted Gaussian PDF, and $\hat{g}_{\mathcal{M}}(r)$ could simply correspond to the empirical frequency of pattern $r$.  With these components, we can construct the extrapolation distribution $\hat{F}_{A,\mathcal{M}}(\cdot\mid\cdot)$, and with these elements a joint full-data distribution
can be constructed in a similar manner as equation \eqref{eq::cdf}.

\section{Data Analysis}	\label{sec::data}

To illustrate the usage of nonparametric pattern-mixture models, we now present an application to the analysis of data coming from a clinical trial on schizophrenia.  These data had been previously analyzed under parametric or semiparametric longitudinal models \citep[e.g.,][]{Diggleetal07}.  The purpose of the trial was to evaluate the effectiveness of four different doses of a new treatment (risperidone, an antipsychotic medication, with 2, 6, 10 or 16 mg/day) compared with placebo and with a standard of care (20 mg/day of haloperidol, a standard antipsychotic), in patients with chronic schizophrenia \citep{Risperidone94}.  The Positive and Negative Syndrome Scale for Schizophrenia (PANSS) score was measured on patients one week before, the day of, and on weeks 1, 2, 4, 6, and 8 after randomization.  In the left panel of Figure \ref{fig::summary} we summarize the frequency of dropout times for each arm of the trial.  Here for simplicity the new treatment is taken as the arm of 6 mg/day of risperidone, since this was found to be the most effective dose, and so we omit the results for the other risperidone arms.  The center and right panels of Figure \ref{fig::summary} present the observed means of the PANSS score among those who were last seen in week 4 and those who completed the study.  
\begin{figure}
\center
\includegraphics[width=6 in]{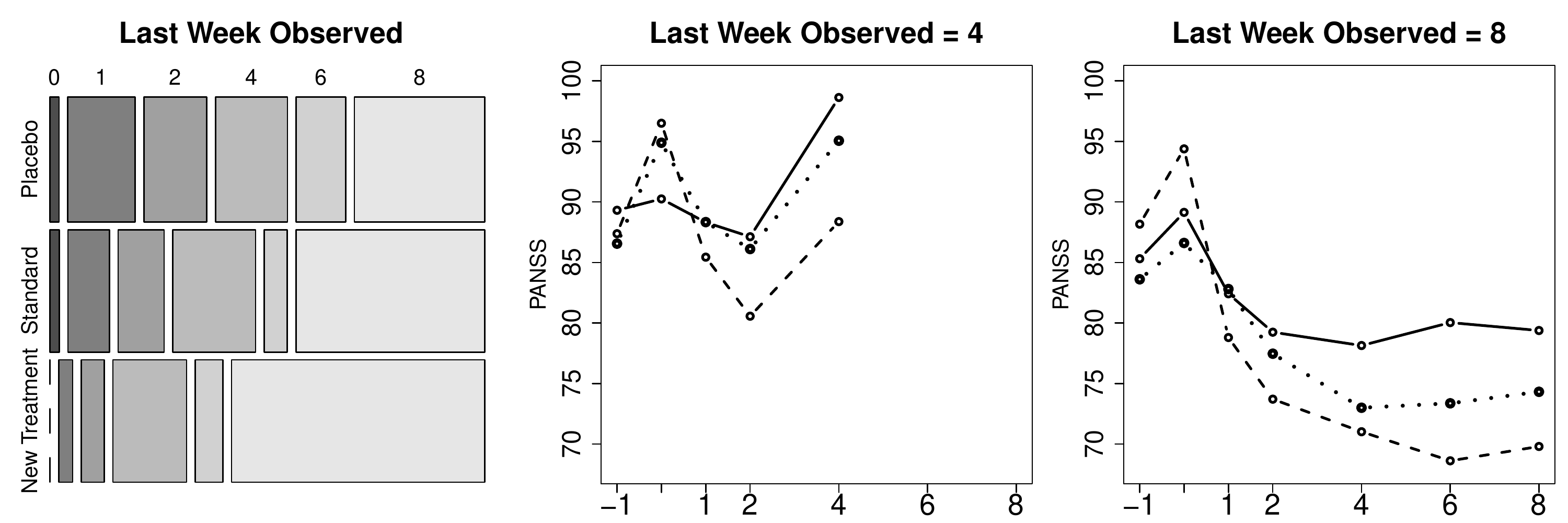}
\caption{Left: distribution of dropout per each treatment arm.  Center and right: observed means of PANSS score for two dropout times, where solid lines represent the placebo, dotted lines the standard treatment, and dashed lines the new treatment.}
\label{fig::summary}
\end{figure}

In this context we are interested in estimating average treatment effects (ATEs) over time.  Let $X_j$ denote the PANSS score at time $j$.  Denote $\mu_{j}^N=E(X_j\mid \text{New})$, $\mu_{j}^S=E(X_j\mid \text{Standard})$, and $\mu_{j}^P=E(X_j\mid \text{Placebo})$.  The ATEs that we are interested in are $\mu_j^N-\mu_j^P$, new treatment vs placebo; $\mu_j^S-\mu_j^P$, standard treatment vs placebo; and $\mu_{j}^N-\mu_{j}^S$, new vs standard treatment.  Since the true ATEs are not accessible to us, not even with infinite samples, we focus on estimating ATEs under a given identifying restriction $A$, and denote the corresponding means as $\mu^N_{j,A}$, $\mu^S_{j,A}$, $\mu^P_{j,A}$ to emphasize the dependence on the assumption.    

Our approach estimates the full-data distribution under an assumption $A$, approximates it via a Monte Carlo procedure, which is then used to evaluate functionals of interest.  To implement our methodology, we used Gaussian kernels in the construction of the surrogate estimate \eqref{eq::kde}, using Silverman's rule \citep{silverman1986density} to compute the bandwidths with  the observed PANSS scores for each week.  We then implemented our Monte Carlo approximation using Algorithm \ref{alg::MC}, taking $V=100$ Monte Carlo samples.  We used the AC, 3NC and NC assumptions, as explained in Section \ref{ss:ident_monotone}. We note that the AC restriction is equivalent to the missing at random (MAR) assumption under mononone missingness, and in this case also equivalent to the 5NC restriction.  This approach thus provides us with a way of performing sensitivity analysis to the commonly used MAR assumption.   

We compute confidence intervals repeating our estimating procedure over 1000 bootstrap samples, as described in Section \ref{sec::bootstrap}.  In Figure \ref{fig::CIs} we present the point-wise bootstrap 95\% confidence intervals and point estimates of the ATEs.  Often the main interest is in estimating the ATE at the last time point.  Furthermore, one could argue that here the most interesting ATE is $\mu^N_{8,A}-\mu^S_{8,A}$, as it compares the new treatment with the standard of care at week 8.  From Figure \ref{fig::CIs} we can see that all the confidence intervals for this ATE fall below zero, under each of the three missing data assumptions considered here.  This gives us compelling evidence for declaring superiority of the new treatment over the standard of care, as this conclusion seems to be insensitive to the missing data assumption.  

\begin{figure}
\center
\includegraphics[width=6 in]{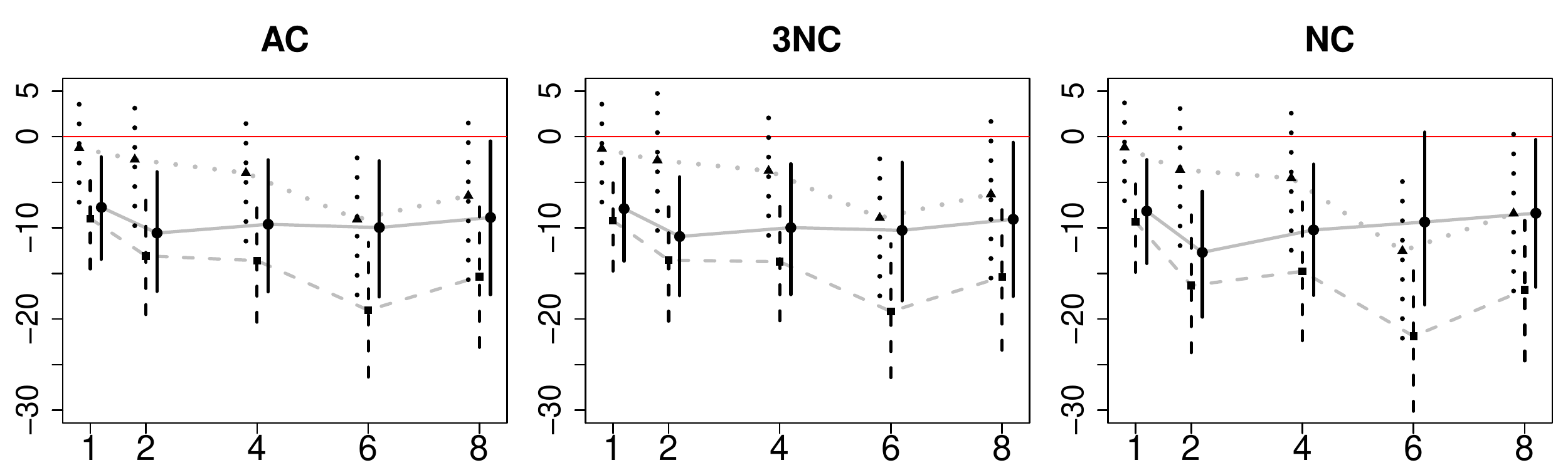}
\caption{Estimated ATEs from a clinical trial on schizophrenia under three different missing-data identifying assumptions.  Dashed lines indicate results for the ATEs that compare the new treatment vs the placebo, $\mu_{j,A}^N-\mu_{j,A}^P$; dotted lines compare the standard treatment vs placebo, $\mu_{j,A}^S-\mu_{j,A}^P$; and solid lines compare new vs standard treatment $\mu_{j,A}^N-\mu_{j,A}^S$; for times $j=1,2,4,6,8$, and assumptions $A=$AC, 3NC, NC.  Vertical lines represent point-wise 95\% bootstrap confidence intervals.}
\label{fig::CIs}
\end{figure}

\section{Discussion}	\label{sec::discussion}

We studied how to conduct nonparametric inference with missing data under the pattern-mixture model 
framework. 
We introduced the concept of donor-based identification under monotone missingness and generalized it
to nonmonotone missing data. 
We proposed an estimator of the full-data distribution based on a surrogate estimator of the observed-data distribution based on kernel smoothing, derived the corresponding convergence rates 
and proved asymptotic normality. 
To numerically compute the estimator of functionals of interest, 
we proposed a Monte Carlo method that 
allows us to easily sample from the estimated full-data distribution. 
We also introduced a bootstrap method for constructing confidence intervals, and we revisited Efron's bootstrap diagram to explain why and how the bootstrap method works.  We presented the underlying theory supporting our methodology, and numerical analyses to illustrate the applicability of our approach in practice. 

Our current results were developed under donor-based identification, which cover important identifying restrictions used in pattern-mixture models, but it is of interest to study extensions of our methods to handle other identifying restrictions that analysts might want to specify.  Two challenges arise when thinking of working with totally generic restrictions.  First, it is not clear how to design a Monte Carlo method for sampling from the estimated full-data distribution under arbitrary restrictions.  Second, perturbation theory similar to the one in Theorem~\ref{thm::pdf} needs to be derived, although once one has it, convergence rates, asymptotic normality, and the validity of the bootstrap can all be derived in a similar manner.

Another open problem is how to reduce the computational cost for the bootstrap method
when the estimator is approximated via Monte Carlo.  
Our current approach requires using Monte Carlo (Algorithm~\ref{alg::MC})
for every bootstrap sample. 
When the sample size is large, this procedure may not be computationally appealing, although one could parallelize it. 
An interesting observation is that some functionals under some restrictions might be obtainable in closed form, such as the mean under the CC restriction, as shown in Appendix \ref{sec::simple}, thereby avoiding the need for the Monte Carlo approximation. 

\section*{Acknowledgement}
YC is supported by NSF grant DMS 1810960 and NIH grant U01 AG016976.

\bibliographystyle{apa-good}
\bibliography{bootstrap}

\appendix

\section{Examples: CC restriction}	\label{sec::simple}



We provide a detailed analysis under the CC restriction.
Consider the case with $d$ variables. 
Let $1_d$ be a vector of ones of length $d$. The 
CC restriction assumes that the extrapolation density has the following form:
$$
f_{\cc}(x_{\bar r}\mid x_r,r) = g(x_{\bar r}\mid x_r,{1_d}) = \frac{g(x, {1_d})}{g(x_r , {1_d})}
= \frac{g(x, {1_d})}{\int g(x, {1_d})\mu(d x_{\bar r})}.
$$
To further simplify the notation, we assume all
variables are continuous and all kernel functions and smoothing bandwidths are the same. 
We use the notation
$$
K_h\left(x_r;X_{i,r}\right)\equiv \prod_{j\in r} 
K_h\left(x_j;X_{ij}\right) \equiv \prod_{j=1}^{d} K_j(x_j;X_{ij},h)^{r_{j}},
$$
where $j\in r$ stands for the indices of the elements of $r$ that are $1$
and $|r| = \sum_{j=1}^d r_j$ is the number of $1$'s in $r$.

{\bf CDF estimation.}
The estimator of the extrapolation density is
\begin{equation}
\hat{f}_{\cc,h}(x_{\bar r}\mid x_r,r) = \frac{\hat{g}_h(x, {1_d})}{\int \hat{g}_h(x, {1_d})\mu(d x_{\bar r})},
\label{eq::ccmv0}
\end{equation}
where 
\begin{align*}
\hat{g}_h(x, {1_d}) &= \frac{1}{n}\sum_{i=1}^{n}
K_h\left(x;X_{i}\right) I(R_i = {1_d}).
\end{align*}
Namely, $\hat{g}_h(x, {1_d})$ is the estimated PDF  of all study variables
using the observations without any missing entries. 
Let $\hat{F}_{\cc,h}(x_{\bar r}\mid x_r, r) = \int_{-\infty}^{x_{\bar r}}\hat{f}_{\cc,h}(x'_{\bar r}\mid x_r,r) \mu(dx'_{\bar r})$
be the estimator of $F_{\cc}(x_{\bar r}\mid x_r,r)$
using the KDE.
The CDF of the full-data distribution is estimated by 
\begin{equation}
\begin{aligned}
\hat{F}_{\cc,h}(x,r)&=
\int_{-\infty}^{x_r}\hat{F}_{\cc,h}(x_{\bar r}\mid x'_r, r)\hat{G}(dx_r',r)\\
&=\frac{1}{n}\sum_{i=1}^n \hat{F}_{\cc,h}(x_{\bar r}\mid X_{i,r}, r)
I(R_i = r, X_{i,r}\leq x_r) .
\end{aligned}
\label{eq::ccmv1}
\end{equation}



{\bf PDF estimation.}
If the goal is to estimate the PDF of the full-data distribution,
we use
\begin{equation}
\begin{aligned}
\hat{f}_{\cc,h}(x,r) &=  \hat{g}_h(x_r, r) \hat{f}_{\cc,h}(x_{\bar r}\mid x_r,r)\\
& = \frac{ \hat{g}_h(x_r, r)}{\hat{g}_h(x_r, {1_d})} \hat{g}_h(x, {1_d}) ,
\end{aligned}
\label{eq::ccmv2}
\end{equation}
where 
\begin{equation}
\begin{aligned}
\hat{g}_h(x_r, r)& = \frac{1}{n}\sum_{i=1}^{n} 
K_h\left(x_r;X_{i,r}\right) 
I(R_i = {r}).
\end{aligned}
\label{eq::ccmv3}
\end{equation}

{\bf Estimating the mean parameter.}
As a special case, we analyze the estimator for the mean parameter. 
To simplify the problem, we consider the case with two study variables $X_1$ and $X_2$
so the possible missing patterns are $\mathcal{R} = \{(00), (10), (01), (11)\}$. 
The goal is to estimate the marginal mean of $X_2$, i.e., $\mu_2= \E(X_2)$,
under the CC restriction. 
Using equation \eqref{eq::ccmv1}, 
\begin{align*}
\hat{\mu}_2 & = \int x_2 \hat{F}_{\cc,h}(dx, dr)\\
&= \sum_{r} \iint x_2 \hat{F}_{\cc,h}(dx_{\bar r}\mid x_r,r) \hat{G}(dx_r,r)\\
& = \sum_{r}\iint x_2\hat{f}_{\cc,h}(x_{\bar r}\mid x_{r},r) \mu(dx_{\bar r} )\hat{G}(dx_r, r)\\
& = \sum_{r}\iint x_2\hat{f}_{\cc,h}(x_{\bar r}\mid x_{r},r) \mu(dx_{\bar r}) \hat{G}(dx_r\mid r)\hat{g}(r)\\
& = \sum_{r}\hat{g}(r)\hat\mu_{2,r},
\end{align*}
where 
$\hat\mu_{2,r} = \iint x_2\,\,\hat{f}_{\cc,h}(x_{\bar r}\mid x_{r},r) \mu(dx_{\bar r}) \hat{G}(dx_r\mid r)$
and $\hat{g}(r) = n_r/n$, with $n_r=\sum_{i=1}^n I(R_i=r)$.


When $r=(1,1)$, we have 
$$
\hat \mu_{2,11} = \iint x_2\,\, \hat{G}(dx_1dx_2\mid 11)=\int x_2\,\, \hat{G}(dx_2\mid 11)=n_{11}^{-1}\sum_{i=1}^nX_{i,2}I(R_{i}=11).$$  
When $r=(0,0)$, we have 
\begin{align*}
\hat\mu_{2,00} &= \iint x_2\,\,\hat{f}_{\cc,h}(x_1,x_2\mid 00) \mu(dx_1dx_2)\\
&= \iint x_2\,\,\hat{g}_{h}(x_1,x_2\mid 11) \mu(dx_1dx_2)\\
&=n_{11}^{-1}\sum_{i=1}^{n} \int x_2 K_h\left(x_2;X_{i,2}\right) \mu(dx_2) I(R_i = {11})\\
&=n_{11}^{-1}\sum_{i=1}^nX_{i,2}I(R_{i}=11).
\end{align*}
When $r=(0,1)$, we have 
\begin{align*}
\hat\mu_{2,01} &= \iint x_2\,\,\hat{f}_{\cc,h}(x_{1}\mid x_{2},01) \mu(dx_1) \hat{G}(dx_2\mid 01)\\
&=\int x_2\,\, \hat{G}(dx_2\mid 01)\\
&=n_{01}^{-1}\sum_{i=1}^nX_{i,2}I(R_{i}=01).
\end{align*}
The case for $r=(1,0)$ is more involved:
\begin{align*}
\hat{\mu}_{2,10} & = \iint x_2\,\,\hat{f}_{\cc,h}(x_{2}\mid x_{1},10) \mu(dx_{2})\,\, \hat{G}(dx_1\mid 10)\\
& = \frac{1}{n_{10}}\sum_{i=1}^n\int x_2\,\,\hat{f}_{\cc,h}(x_{2}\mid X_{i,1},10) \mu(dx_{2})I(R_i=10).
\end{align*}
Recall that the CC restriction implies 
$\hat{f}_{\cc,h}(x_{2}\mid x_{1},10) = \frac{\hat{g}_{h}(x_1,x_2,11)}{\hat{g}_{h}(x_1,11)}$.
Thus,
\begin{align*}
\int x_2\,\,\hat{f}_{\cc,h}(x_{2}\mid x_{1},10) \mu(dx_{2}) &= \frac{\int x_2\hat{g}_{h}(x_1,x_2,11)\mu(dx_2)}{\hat{g}_{h}(x_1,11)}\\
&= \frac{\sum_{j=1}^n X_{j,2}K_h\left(x_1;X_{j,1}\right)I(R_j = 11)}{\sum_{k=1}^n K_h\left(x_1;X_{k,1}\right)I(R_k = 11)}\\
& = \sum_{j=1}^n X_{j,2} W_{j}(x_1),
\end{align*}
where $W_{j}(x_1) = \frac{K_h(x_1;X_{j,1})I(R_j = 11)}{\sum_{k=1}^n K_h(x_1;X_{k,1})I(R_k = 11)}$
are such that $\sum_{i=1}^nW_{i}(x_1)=1$.
With this, we can then obtain the estimator 
\begin{align*}
\hat{\mu}_{2,10} 
& = \frac{1}{n_{10}}\sum_{i,j=1}^n X_{j,2} W_{j}(X_{i,1}) I (R_i = 10)\\
& =   \frac{1}{n_{10}}\sum_{i=1}^n\sum_{j=1}^nX_{j,2}\frac{ K_h\left(X_{i,1};X_{j,1}\right)I(R_j = 11)}{\sum_{k=1}^n K_h\left(X_{i,1};X_{k,1}\right)I(R_k = 11)}I(R_i = 10)\\
& =   \frac{1}{n_{10}}\sum_{j=1}^nX_{j,2} \sum_{i=1}^n\frac{ K_h\left(X_{i,1};X_{j,1}\right)I(R_i = 10)}{\sum_{k=1}^n K_h\left(X_{i,1};X_{k,1}\right)I(R_k = 11)}I(R_j = 11).
\end{align*}

We note that $\hat{\mu}_{2,10}$ is essentially a weighted average of the fully observed data, so we may rewrite it as
$$
\hat{\mu}_{2,10} = \sum_{j=1}^n \alpha_j X_{j,2} I(R_j=11 ),
$$
where $\alpha_j = \frac{1}{n_{10}}\sum_{i=1}^n\frac{ K_h\left(X_{i,1};X_{j,1}\right)I(R_i = 10)}{\sum_{k=1}^n K_h(X_{i,1};X_{k,1})I(R_k = 11)}$. Therefore, combining all estimators $\hat{\mu}_{2,r}$ together, we conclude that 
\begin{equation}
\hat{\mu}_2 = \sum_r \hat{g}(r) \hat{\mu}_{2,r} = \sum_{i=1}^n X_{i,2} \omega_i,
\label{eq::ccmv_mu2}
\end{equation}
where 
$$
\omega_i = \begin{cases}
\frac{1+n_{00}/n_{11}+ n_{10}\alpha_i}{n},&\mbox{ if $R_i = 11$},\\
\frac{1}{n},&\mbox{ if $R_i = 01$},\\
0,&\mbox{otherwise}.
\end{cases}
$$
Note that $\omega_i\geq 0$ and $\sum_{i=1}\omega_i =1$
so the estimator is essentially a weighted estimator, where the weight is determined by
the identifying assumption.

\section{Hadamard Differentiation}	\label{sec::Hada}

The Hadamard differentiation
is one type of functional derivative
that is 
key 
for a statistical functional to have asymptotic normality. 
Here we briefly describe the definition of Hadamard differentiation.
%
%

Let $(\D_1,\|\cdot\|_{\D_1})$ and $(\D_2,\|\cdot\|_{\D_2})$ be two normed spaces
and let $\Psi:\D_1\mapsto \D_2$ be a mapping. 
$\Psi$ is called Hadamard differentiable at $\omega\in \D_1$ with a differentiation
$\dot{\Psi}_\omega: \D_1\mapsto \D_2$ if 
for any $\eta_t\rightarrow \eta$, 
$$
\lim_{t\rightarrow 0}\left\|\frac{\Psi(\omega+t\cdot \eta_t) - \Psi(\omega)}{t} -\dot{\Psi}_\omega(\eta)\right\|_{\D_2}  = 0.
$$
The Hadamard differentiation is commonly used to derive the bootstrap theory. 
Roughly speaking, 
if a statistical functional $\theta:\mcF\mapsto\R$ is Hadamard differentiable at $F_0\in \mcF$
and $\sqrt{n}(\hat{F}_n-F_0)\overset{D}{\rightarrow} \mcB$ for some random process $\mcB$, 
then 
\begin{equation}
\sqrt{n}\left(\theta(\hat{F}_n) - \theta(F_0)\right) \overset{D}{\rightarrow} \dot{\theta}_{F_0}(\mcB).
\label{eq::fdelta}
\end{equation}
Equation \eqref{eq::fdelta} is known as the functional delta method \citep{van1998asymptotic}. 
Many common population quantities can be expressed as a Hadamard-differentiable 
statistical functional; for instance, population mean, population quantiles,
population median, correlation between two variables, regression coefficients.

%
%
%
%
%

\section{Proofs}

For any function $\eta(x)$ we write 
$\eta(x)dx \equiv \eta(x)\mu(dx)$ to simplify
the notation for integration.

\begin{proof}[Proof of Theorem~\ref{thm::pdf}]

Because
$$
f_A(x_{>t}\mid x_{\leq t}, t) = \prod_{s=t+1}^d f_A(x_s\mid x_{<s}, t) = 
\prod_{s=t+1}^d g(x_s\mid x_{<s}, T\in \assum_{ts}).
$$
we have
$$
\Delta f_A(x_{>t}\mid x_{\leq t}, t)
 = \prod_{s=t+1}^d \tilde{g}(x_s\mid x_{<s}, T\in \assum_{ts})-\prod_{s=t+1}^d g(x_s\mid x_{<s}, T\in \assum_{ts}).
$$
Using the fact that 
$$
\tilde{g}(x_s\mid x_{<s}, T\in \assum_{ts}) = g(x_s\mid x_{<s}, T\in \assum_{ts}) + \Delta g(x_s\mid x_{<s}, T\in \assum_{ts}),
$$
where $\Delta g(x_s\mid x_{<s}, T\in \assum_{ts})$ is a small quantity (due to equation \eqref{eq::limit}), 
the above equality becomes
\begin{align*}
\Delta f_A(x_{>t}\mid x_{\leq t}, t)
 &= \prod_{s=t+1}^d \tilde{g}(x_s\mid x_{<s}, T\in \assum_{ts})-\prod_{s=t+1}^d g(x_s\mid x_{<s}, T\in \assum_{ts})\\
 & = \prod_{s=t+1}^d \left[{g}(x_s\mid x_{<s}, T\in \assum_{ts})+ \Delta g(x_s\mid x_{<s}, T\in \assum_{ts})\right]\\
&\qquad \qquad  -\prod_{s=t+1}^d g(x_s\mid x_{<s}, T\in \assum_{ts})\\
& =  \left\{\sum_{s=t+1}^d
\frac{\Delta g(x_s\mid x_{<s}, T\in \assum_{ts})}{ g(x_s\mid x_{<s}, T\in \assum_{ts})}\right\}\prod_{\tau=t+1}^d g(x_\tau\mid x_{<\tau}, t\in \assum_{t \tau})
+O(\Delta_1^2),
\end{align*}
where 
$O(\Delta_1^2)$ involves products of two small terms ($\Delta$ of some functions) so it is negligible. 
Note that the identifying restriction implies that 
$\prod_{\tau=t+1}^d g(x_\tau\mid x_{<\tau}, t\in \assum_{t \tau}) = f_A(x_{>t}\mid x_{\leq t}, t)$, 
so we can rewrite the above as
\begin{equation}
\frac{\Delta f_A(x_{>t}\mid x_{\leq t}, t)}{ f_A(x_{>t}\mid x_{\leq t}, t)}
 = \sum_{s=t+1}^d\frac{\Delta g(x_s\mid x_{<s}, T\in \assum_{ts})}{ g(x_s\mid x_{<s}, T\in \assum_{ts})} + O(\Delta_1^2).
 \label{eq::kde::1}
\end{equation}

Using the fact that 
$$
g(x_s\mid x_{<s}, T\in \assum_{ts}) = \frac{g(x_{\leq s}\mid T\in \assum_{ts})}{g(x_{<s}\mid T\in \assum_{ts})},
$$
we conclude that 
\begin{align*}
\Delta g(x_s\mid x_{<s}, T\in \assum_{ts})
& = \frac{\tilde{g}(x_{\leq s}\mid T\in \assum_{ts})}{\tilde{g}(x_{<s}\mid T\in \assum_{ts})}-\frac{g(x_{\leq s}\mid T\in \assum_{ts})}{g(x_{<s}\mid T\in \assum_{ts})}\\
& = \frac{{g}(x_{\leq s}\mid T\in \assum_{ts})+\Delta {g}(x_{\leq s}\mid T\in \assum_{ts})}{{g}(x_{<s}\mid T\in \assum_{ts})+\Delta {g}(x_{<s}\mid T\in \assum_{ts})}-\frac{g(x_{\leq s}\mid T\in \assum_{ts})}{g(x_{<s}\mid T\in \assum_{ts})}\\
& = 
\frac{\Delta g(x_{\leq s}\mid T\in \assum_{ts})}{g(x_{<s}\mid T\in \assum_{ts})}
- \frac{\Delta g(x_{< s}\mid T\in \assum_{ts})}{g(x_{<s}\mid T\in \assum_{ts})}\frac{g(x_{\leq s}\mid T\in \assum_{ts})}{g(x_{<s}\mid T\in \assum_{ts})}
+ O(\Delta_2^2),
\end{align*}
where $ O(\Delta_2^2)$ is again a quantity involving
the product of two or more small terms so it is negligible. 
Diving both sides by $g(x_s\mid x_{<s}, T\in \assum_{ts})$,
we conclude
$$
\frac{\Delta g(x_s\mid x_{<s}, T\in \assum_{ts})}{g(x_s\mid x_{<s}, T\in \assum_{ts})}
 = \frac{\Delta g(x_{\leq s}\mid T\in \assum_{ts})}{g(x_{\leq s}\mid T\in \assum_{ts})}
- \frac{\Delta g(x_{< s}\mid T\in \assum_{ts})}{g(x_{< s}\mid T\in \assum_{ts})} + O(\Delta_2^2).
$$
Putting this back to equation \eqref{eq::kde::1}, we conclude
$$
\frac{\Delta f_A(x_{>t}\mid x_{\leq t}, t)}{ f_A(x_{>t}\mid x_{\leq t}, t)}
= \sum_{s=t+1}^d \left\{ \frac{\Delta g(x_{\leq s}\mid T\in \assum_{ts})}{g(x_{\leq s}\mid T\in \assum_{ts})}
- \frac{\Delta g(x_{< s}\mid T\in \assum_{ts})}{g(x_{< s}\mid T\in \assum_{ts})} \right\}
+ O(\Delta_1^2+\Delta_2^2),
$$
which is the desired result by identifying
$\tilde{W}_n= O(\Delta_1^2+\Delta_2^2)$.
Note that since $O(\Delta_1^2+\Delta_2^2)$ involves product of two small
quantities,
it is easy to see that $\frac{O(\Delta_1^2+\Delta_2^2)}{\Delta f_A(x_{>t}\mid x_{\leq t}, t)}\rightarrow 0$
uniformly for all $x$.
\end{proof}


\begin{proof}[Proof of Theorem~\ref{thm::kde}]

{\bf Part 1: convergence of CDF estimator.}\\
Recall that 
\begin{align*}
\bar{F}_{A,h}(x_{>t}\mid x_{\leq t},t) = \int_{-\infty}^{x_{>t}}\bar{f}_{A,h}(x'_{>t}\mid x_{\leq t},t) dx'_{>t}.
\end{align*}
We define
$$
\hat{F}^\dagger_{A,h}(x,t) = \sum_{t} \frac{1}{n}\sum_{i=1}^n \bar{F}_{A,h}(x_{>t}\mid X_{i,r},t) 
I(T_i=t, X_{i,\leq t}\leq x_{\leq t})
$$
for each $t$.
We will use this quantity to decompose the uncertainty of $\hat{F}_{A,h}(x)$.
Specifically, We bound the rate of $\hat{F}_{A,h}-\bar{F}_{A,h}$ using the decomposition 
$$
\hat{F}_{A,h}-\bar{F}_{A,h} = \underbrace{\hat{F}_{A,h}-\hat{F}^\dagger_{A,h}}_{=(I)}+\underbrace{\hat{F}^\dagger_{A,h}-\bar{F}_{A,h}}_{=(II)}
$$
and control the rates of (I) and (II). 
After deriving this rate, we will then analyze the bias $\bar{F}_{A,h} - F_A$.

Because $\hat{F}^\dagger_{A,h}$ is the summation of IID random elements,
the rate of (II) is easier to derive so we focus on it first.
Because of the IID assumption,
\begin{align*}
\E\bigg\{\frac{1}{n}\sum_{i=1}^n &\bar{F}_{A,h}(x_{>t}\mid X_{i,\leq t},t) I(T_i=t, X_{i,\leq t}\leq x_{\leq t})\bigg\} \\
&= 
\E\left\{\bar{F}_{A,h}(x_{>t}\mid X_{1,\leq t},t) I(T_1=t, X_{1,\leq t}\leq x_{\leq t})\right\} \\
& =\int_{-\infty}^{x_{ \leq t}} \int_{-\infty}^{x_{>t}}\bar{f}_{A,h}(x'_{>t}\mid x'_{\leq t},t) g(x'_{\leq t}, t) dx'_{>t}dx'_{\leq t}\\
& =\int_{-\infty}^{x'_{ \leq t}=x_{ \leq t}} \int_{x_{ \leq t}'=-\infty}^{x_{>t}}\bar{f}_{A,h}(x'_{>t}\mid x'_{\leq t},t) G(dx'_{\leq t}, t) dx'_{>t}\\
&= \bar{F}_{A,h}(x, t)
\end{align*}
for each $t=1,\cdots, d$. 
Therefore,
$$
\E(\hat{F}^\dagger_{A,h}) =  \bar{F}_{A,h}
$$
so we only need to focus on the variance. 
The variance is very easy to derive because the variance of each individual
$$
{\sf Var}(\bar{F}_{A,h}(x_{>t}\mid X_{1,\leq t},t) I(T_1=t, X_{1,\leq t}\leq x_{\leq t})) \leq 1
$$
because both $F_A$ and indicator function are uniformly bounded by $1$.
So
${\sf Var}(\hat{F}^\dagger_{A,h}) = O(1/n)$,
which implies
$$
(II) = O_P\left(\sqrt{\frac{1}{n}}\right).
$$

To bound (I),  note that
\begin{equation}
\begin{aligned}
\hat{F}_{A,h}(x,t)-\hat{F}^\dagger_{A,h}(x,t) & = 
\frac{1}{n}\sum_{i=1}^n \left\{\hat{F}_{A,h}(x_{>t}\mid X_{i,\leq t},t)-\bar F_{A,h}(x_{>t}\mid X_{i,\leq t},t)\right\} \\
&\qquad\qquad \times
I(T_i=t, X_{i,\leq t}\leq x_{\leq t}).
\end{aligned}
\label{eq::pf::kde1}
\end{equation}
So the key is to bound the difference 
\begin{align*}
\hat{F}_{A,h}(x_{>t}\mid X_{i,\leq t},t)&- \bar{F}_{A,h}(x_{>t}\mid X_{i,\leq t},t)\\
& = \frac{1}{n}\sum_{i=1}^n\int_{-\infty}^{x_{>t}}\left\{\hat{f}_{A,h}(x'_{>t}\mid X_{i,\leq t} ,t)-\bar{f}_{A,h}(x'_{>t}\mid X_{i,\leq r} ,t)\right\}dx'_{>t},
\end{align*}
which boils down to the difference between the density estimator. 


Using Theorem~\ref{thm::pdf},
\begin{equation}
\begin{aligned}
\hat{f}_{A,h}(x_{>t}\mid x_{\leq t} ,t)&-\bar{f}_{A,h}(x_{>t}\mid x_{\leq t} ,t)\\
&= \bar{f}_{A,h}(x_{>t}\mid x_{\leq t}, t)\times 
\sum_{s=t+1}^d \bigg\{ \frac{ \hat{g}_h(x_{\leq s}\mid T\in \assum_{ts})- \bar{g}_h (x_{\leq s}\mid T\in \assum_{ts})}{\bar g(x_{\leq s}\mid T\in \assum_{ts})}\\
&\qquad - \frac{\hat {g}_h(x_{< s}\mid T\in \assum_{ts})- \bar g_h (x_{< s}\mid T\in \assum_{ts})}{\bar g_h(x_{< s}\mid T\in \assum_{ts})} \bigg\}
+ \tilde{W}_n(x),
\end{aligned}
\label{eq::pf::kde2}
\end{equation}
where $\tilde{W}_n(x)$ is a negligible term. 
Recall that 
\begin{align*}
\hat{g}_h(x_{\leq s}\mid T\in \assum_{ts}) & = \frac{1}{n_{ts}} \sum_{i=1}^n
K_h\left(x_{\leq s};X_{i,\leq s}\right) I(T_i \in \assum_{ts})\\
\hat{g}_h(x_{< s}\mid T\in \assum_{ts}) & = \frac{1}{n_{ts}} \sum_{i=1}^n
K_h\left(x_{< s};X_{i,< s}\right) I(T_i \in \assum_{ts})
\end{align*}
are both KDEs and $\bar {g}_h(x_{\leq s}\mid T\in \assum_{ts}) = \E(\hat{g}_h(x_{\leq s}\mid T\in \assum_{ts}))$
and $\bar {g}_h(x_{< s}\mid T\in \assum_{ts}) = \E(\hat{g}_h(x_{< s}\mid T\in \assum_{ts}))$
and $n_{ts} = \sum_{i=1}^n I(T_i \in \assum_{ts})$.

Let 
\begin{equation}
\begin{aligned}
\mathcal{E}_{1,s}(x,t) &= \bar{f}_{A,h}(x_{>t}\mid x_{\leq t}, t)\frac{ \hat{g}_h(x_{\leq s}\mid T\in \assum_{ts})- \bar{g}_h (x_{\leq s}\mid T\in \assum_{ts})}{\bar g_h(x_{\leq s}\mid T\in \assum_{ts})}\\
\mathcal{E}_{2,s}(x,t) &= \bar{f}_{A,h}(x_{>t}\mid x_{\leq t}, t)\frac{ \hat{g}_h(x_{< s}\mid T\in \assum_{ts})- \bar{g}_h (x_{< s}\mid T\in \assum_{ts})}{\bar g_h(x_{< s}\mid T\in \assum_{ts})}.
\end{aligned}
\label{eq::pf::E12}
\end{equation}
Equation \eqref{eq::pf::kde2} can be written as
$$
\hat{f}_{A,h}(x_{>t}\mid x_{\leq t} ,t)-\bar{f}_{A,h}(x_{>t}\mid x_{\leq t} ,t) =\sum_{s=t+1}^d 
\left\{\mathcal{E}_{1,s}(x,t) -\mathcal{E}_{2,s}(x,t) \right\} + \tilde{W}_n(x,t).
$$
Because the difference $\hat{F}_{A,h}(x_{>t}\mid X_{i,\leq t},t)- \bar{F}_{A,h}(x_{>t}\mid X_{i,\leq t},t)$
comes from integrating $\hat{f}_{A,h}(x_{>t}\mid x_{\leq t} ,t)-\bar{f}_{A,h}(x_{>t}\mid x_{\leq t} ,t)$,
ignoring the negligible terms
leads to 
\begin{align*}
\hat{F}_{A,h}(x_{>t}\mid X_{i,\leq t},t)&- \bar{F}_{A,h}(x_{>t}\mid X_{i,\leq t},t) =\\
&= \sum_{s=t+1}^d 
\int^{x_{>t}}_{-\infty} \mathcal{E}_{1,s}(x'_{>t}, X_{i,\leq t},t)dx'_{>t} -\int^{x_{>t}}_{-\infty} \mathcal{E}_{2,s}(x'_{>t}, X_{i,\leq t},t) dx'_{>t}.
\end{align*}
So we only need to bound each quantity $\int^{x_{>t}}_{-\infty} \mathcal{E}_{j,s}(x',t)dx'_{>t}$ for $j=1,2$
and $s\in \{t+1,\cdots, d\}$.

Because $\mathcal{E}_{j,s}(x',t)$
is essentially a KDE minus its expectation and rescale by a function,
the quantity $\int^{x_{>t}}_{-\infty} \mathcal{E}_{j,s}(x',t)dx'_{>t}$
is just the corresponding convergence rate
of using this as a CDF estimator. 
%
Using the convergence rate of the CDF estimator from the KDE \citep{reiss1981nonparametric,liu2008kernel} 
along with our assumption on the smoothing bandwidth 
$\frac{\log n}{nh^d}\rightarrow 0$,
we conclude that
\begin{align*}
\E\left(\int^{x_{>t}}_{-\infty} \mathcal{E}_{j,s}(x',t)dx'_{>t}\right) &= o\left(\sqrt{\frac{1}{n}}\right),\\
{\sf Var}\left(\int^{x_{>t}}_{-\infty} \mathcal{E}_{j,s}(x',t)dx'_{>t}\right) & = O\left({\frac{1}{n}}\right),
\end{align*}
for $s\in \{t+1,\cdots, d\}$ and $j = 1,2$ and uniformly for all $x_{\leq t}$.
Therefore,
\begin{align*}
\E\left(\hat{F}_{A,h}(x_{>t}\mid x_{\leq t},t)\right)- \bar{F}_{A,h}(x_{>t}\mid x_{\leq t},t) &= o\left(\sqrt{\frac{1}{n}}\right),\\
{\sf Var}\left(\hat{F}_{A,h}(x_{>t}\mid x_{\leq t},t)\right) &= O\left({\frac{1}{n}}\right).
\end{align*}
Note that the construction of $\mathcal{E}_{j,s}$ does not involve data points
with $T_i = t$, so the convergence rate will not change if
we are conditioning on $\{(X_i,T_i): T_i = t\}$.
Thus, using the law of total expectation and variance and applying
the above results into equation \eqref{eq::pf::kde1},
we conclude that 
$$
(I)=  \hat{F}_{A,h}(x,t )-\hat{F}^\dagger_{A,h}(x,t)= O_P\left(\sqrt{\frac{1}{n}}\right).
$$

To analyze the bias $\bar{F}_{A,h}-F_A$, note that 
the difference between $\bar{F}_{A,h}(x) - F_A(x)$
is due to the quantity 
$$
\bar{F}_{A,h}(x_{>t}\mid x_{\leq t},t) - F_A(x_{>t}\mid x_{\leq t},t) 
 = \int_{-\infty}^{x_{>t}}\left\{\bar{f}_{A,h}(x_{>t}\mid x_{\leq t},t)- f_A(x_{>t}\mid x_{\leq t},t)\right\}dx_{>t}.
$$
Theorem~\ref{thm::pdf} shows that the difference $\bar{f}_{A,h}(x_{>t}\mid x_{\leq t},t)- f_{A}(x_{>t}\mid x_{\leq t},t)$
is due to the difference $\bar{f}_h(x_{\leq s}\mid T\in \assum_{ts}) - f(x_{s}\mid T\in \assum_{ts})$,
which is the bias rate of the KDE $O(h^2)$ under the regular smoothness condition (assumption (A1) and (K1)).
Therefore, we conclude that 
$$
\bar{F}_{A,h}(x,t) - F_A(x,t) = O(h^2).
$$

{\bf Part 2: convergence of PDF estimator.}\\
Recall that 
\begin{align*}
\hat{f}_{A,h}(x,t) &=   \hat{g}(t)\hat {g}_h(x_{\leq t}\mid t)\hat{f}_{A,h}(x_{>t}\mid x_{\leq t},t)\\
\bar{f}_{A,g}(x,t) &=   {g}(t)\bar {g}_h(x_{\leq t}\mid t)\bar{f}_{A,h}(x_{>t}\mid x_{\leq t},t)\\
{f}_A(x,t) &=   {g}(t) {g}(x_{\leq t}\mid t){f}_A(x_{>t}\mid x_{\leq t},t).
\end{align*}
We first derive the bound on $\hat{f}_{A,h}(x,t)-\bar{f}_{A,h}(x,t)$.

It is clear that the difference comes from
$\hat{g}(t) - g(t)$ and $\hat {g}_h(x_{\leq t}\mid t)-\bar {g}_h(x_{\leq t}\mid t)$
and $\hat{f}_{A,h}(x_{>t}\mid x_{\leq t},t)-\bar{f}_{A,h}(x_{>t}\mid x_{\leq t},t)$. 
The first quantity is just the population proportion versus the empirical ratio
so the rate is $\hat{g}(t) - g(t) = O_P\left(\sqrt{1/n}\right)$.
The second quantity is the classical result about the KDE versus its expectation
so the convergence rate is 
$$
\hat {g}_h(x_{\leq t}\mid t)-\bar {g}_h(x_{\leq t}\mid t) = O_P\left(\sqrt{\frac{1}{nh^{t}}}\right),
$$
see, e.g. \cite{wasserman2006all} and \cite{scott2015multivariate}. 
The last quantity can be reduced to the problem of KDE versus its expectation
using Theorem~\ref{thm::pdf}
and thus, we conclude
\begin{align*}
\hat{f}_{A,h}(x_{>t}\mid x_{\leq t},t)-\bar{f}_{A,h}(x_{>t}\mid x_{\leq t},t) 
& = \sum_{s=t+1}^dO_P(\hat{g}_h(x_{\leq s}\mid T\in \assum_{ts})- \bar{g}_h (x_{\leq s}\mid T\in \assum_{ts}))\\
&\qquad+O_P(\hat{g}_h(x_{< s}\mid T\in \assum_{ts})- \bar{g}_h (x_{< s}\mid T\in \assum_{ts}))\\
&= \sum_{s=t+1 }^dO_P\left(\sqrt{\frac{1}{nh^{s}}}\right)\\
& = O_P\left(\sqrt{\frac{1}{nh^{d}}}\right).
\end{align*}
Therefore, after combing all the difference, the dominant term is $O_P\left(\sqrt{\frac{1}{nh^{d}}}\right)$
so we conclude
$$
\hat{f}_{A,h}(x,t)-\bar{f}_{A,h}(x,t) = O_P\left(\sqrt{\frac{1}{nh^{d}}}\right). 
$$

The analysis on $\bar{f}_{A,h}(x,t)- f_A(x,t)$ is just the bias analysis of the KDE.
Using the bias of the KDE and the smoothness condition (A1),
$\bar {g}_h(x_{\leq t}\mid t)- {g}(x_{\leq t}\mid t) = O(h^2)$.
Moreover, by applying Theorem~\ref{thm::pdf}, 
the the difference $\bar{f}_{A,h}(x_{>t}\mid x_{\leq t},t)-{f}_{A,h}(x_{>t}\mid x_{\leq t},t)$
is the difference between KDEs and their expectations,
which is the bias of the KDE.
So we conclude $\bar{f}_{A,h}(x_{>t}\mid x_{\leq t},t)-{f}_A(x_{>t}\mid x_{\leq t},t) = O(h^2)$
which completes the proof.
\end{proof}

\begin{proof}[Proof of Theorem~\ref{thm::uclt}]
First note that the difference can be written as
\begin{align*}
\hat{F}_{A,h}(x,t) - \bar{F}_{A,h}(x,t) &=\bigg\{
\frac{1}{n}\sum_{i=1}^n \hat{F}_{A,h}(x_{> t}\mid X_{i,\leq t},t)\times\\
&\qquad I(T_i=t, X_{i,\leq t}\leq x_{\leq t}) 
- \E\left(\bar{F}_{A,h}(x_{>t}\mid X_{1,\leq t},t)\right)
\bigg\}\\
& =  \bigg\{
\int^{x_{\leq t}' = x_{\leq t}}_{x_{\leq t}' = -\infty}\hat{F}_{A,h}(x_{> t}\mid x'_{\leq t},t)\hat{G}(dx_{\leq t}'\mid t) \\
&\qquad\qquad-\int^{x_{\leq t}' = x_{\leq t}}_{x_{\leq t}' = -\infty}\bar{F}_{A,h}(x_{> t}\mid x_{\leq t}',r) G(dx_{\leq t}', t)
\bigg\}\\
& =  \mathcal{E}'_{n,t}(x).
\end{align*}
Thus, we will focus on a given $t$ and show that $\sqrt{n}\mathcal{E}'_{n,t}$ converges to
a Brownian bridge.

To simplify the problem, we define
\begin{align*}
\Delta_{1,n}(x_{\leq t}, t) & = \hat{G}(x_{\leq t}, t) - G(x_{\leq t}, t),\\
\Delta_{1,n}(dx_{\leq t}, t) & = \hat{G}(dx_{\leq t}, t) - G(dx_{\leq t}, t),\\
\Delta_{2,n}(x_{> t}\mid x_{\leq t},t) &= \hat{F}_{A,h}(x_{> t}\mid x_{r},t)- \bar{F}_{A,h}(x_{> t}\mid x_{\leq t},t),\\
\Delta_{2,n}(dx_{> t}\mid x_{\leq t},t) &= \hat{F}_{A,h}(dx_{> t}\mid x_{r},t)- \bar{F}_{A,h}(dx_{> t}\mid x_{\leq t},t).
\end{align*}

The quantity $\Delta_{1,n}(x_{\leq t}, t)$ is just the difference between
an empirical distribution function and the corresponding CDF
so the DKW inequality \citep{dvoretzky1956asymptotic} implies 
\begin{equation}
\sup_{x_{\leq t},t} \left|\Delta_{1,n}(x_{\leq t}, t)\right| = O_P\left(\sqrt{\frac{1}{n}}\right).
\label{eq::D1}
\end{equation}
Also, because $\Delta_{2,n}(x_{> t}\mid x_{\leq t},t)$ is related to the smoothed CDF estimator versus
its expectation,
it is known that 
\begin{equation}
\sup_{x_{> t}, x_{\leq t},t} \left|\Delta_{2,n}(x_{> t}\mid x_{\leq t},t)\right| = O_P\left(\sqrt{\frac{\log n}{n}}\right);
\label{eq::D2}
\end{equation}
see, e.g., \cite{reiss1981nonparametric} and \cite{liu2008kernel}. 

Using $\Delta_{1,n}$ and $\Delta_{2,n}$, we can rewrite $\sqrt{n}\mathcal{E}'_{n,t}$ as
\begin{align*}
\sqrt{n}\mathcal{E}'_{n,t} &= \sqrt{n}\int^{x_{\leq t}' = x_{\leq t}}_{x_{\leq t}' = -\infty}
\hat{F}_{A,h}(x_{> t}\mid x'_{\leq t},t)\hat{G}(dx_{\leq t}', t) 
-\int^{x_{\leq t}' = x_{\leq t}}_{x_{\leq t}' = -\infty}\bar{F}_{A,h}(x_{> t}\mid x_{\leq t}',t) G(dx_{\leq t}', t)\\
& = \underbrace{\int^{x_{\leq t}' = x_{\leq t}}_{x_{\leq t}' = -\infty} \bar{F}_{A,h}(x_{> t}\mid x_{\leq t}',t)
\sqrt{n}\Delta_{1,n}(dx_{\leq t}', t)}_{(I)}\\
&\qquad+\underbrace{\int^{x_{\leq t}' = x_{\leq t}}_{x_{\leq t}' = -\infty}\sqrt{n} \Delta_{2,n}(x_{\leq t}\mid x_{\leq t}',t) G(dx_{\leq t}', t)}_{(II)}\\
&\qquad \qquad+ \underbrace{\sqrt{n} \int^{x_{\leq t}' = x_{\leq t}}_{x_{\leq t}' = -\infty}\Delta_{2,n}(x_{\leq t}\mid x_{\leq t}',t) \Delta_{1,n}(dx_{\leq t}', t)}_{(III)}.
\end{align*}

To show the convergence toward a Brownian bridge, we separately analyze
each term.
Due to equation \eqref{eq::D1} and \eqref{eq::D2}, $(III) = o_P(1) $
so we only need to focus on (I) and (II).

{\bf Analysis of (I):}\\
Observe that the quantity 
$$
\sqrt{n}\Delta_{1,n}(x_{\leq t}, t)  = \sqrt{n}\left(\hat{G}(x_{\leq t}, t) - G(x_{\leq t}, t) \right) 
= \mathbb{G}_t (x_{\leq t})
$$
defines an empirical process. 
Thus, (I) can be written as
$$
(I) = \int \bar{F}_{A,h}(x_{> t}\mid x_{\leq t}',t) I(x_{\leq t}'\leq x_{\leq t}) \mathbb{G}_t (dx'_{\leq t}) 
= \mathbb{G}_t(f_{x_{> t}, x_{\leq t}}),
$$
where the last equality is the common expression in the empirical process theory 
\citep{van1996weak, van1998asymptotic}
and $f_{x_{> t}, x_{\leq t}}:\mathbb{R}^{t}\mapsto \mathbb{R} $ is the function 
$f_{x_{> t}, x_{\leq t}}(y) = \bar{F}_{A,h}(x_{> t}\mid y,t) I(y\leq x_{\leq t}) $
with two indices $x_{> t}, x_{\leq t}$.
To show that $(I)$ converges to a Brownian bridge,
we need to show that the class of functions
$$
\mathcal{F}_{0,h}= \left\{f_{x_{> t}, x_{\leq t}}(y) = \bar{F}_{A,h}(x_{> t}\mid y,t) I(y\leq x_{\leq t}) : (x_{> t},x_{\leq t})\in\mathcal{X}\right\}
$$
has some nice property for every $h\leq 1$.
Let $\mathcal{X}_{> t}$ be the support of $x_{> t}$ and $\mathcal{X}_{\leq t}$ be the support of $x_{\leq t}$.
Due to assumption (A1), the derivative of $\bar{F}_{A,h}(x_{> t}\mid y,t)$
with respect to $x_{> t}$ is uniformly bounded
so the class 
$$
\mathcal{F}_{1,h}= \left\{f_{x_{> t}}(y) = \bar{F}_{A,h}(x_{> t}\mid y,t) : x_{> t}\in\mathcal{X}_{> t}\right\}
$$
has an $\epsilon$-bracketing number of the order $O(1/\epsilon^{d-t})$ (see example 19.7 of \citealt{van1998asymptotic})
so its uniform entropy integral converges (loosely speaking, $\int_0^{\delta}\log \mbox{($\epsilon$-bracketing number)} d\epsilon$ converges to $0$ when $\delta\rightarrow 0$) 
for every $h\leq 1$.
Moreover, 
$$
\mathcal{F}_2= \left\{f_{x_{\leq t}}(y) =I(y\leq x_{\leq t}) : x_{ \leq t}\in\mathcal{X}_{ \leq t}\right\}
$$
has an $\epsilon$-bracketing number of the order $O(1/\epsilon^{2t})$ 
(see example 19.7 of \citealt{van1998asymptotic}; the power is $2t$ because we have $t$ variables)
so again its uniform entropy integral converges. 
Thus, the class $\mathcal{F}_{0,h} =\{f_1\cdot f_2: f_1\in \mathcal{F}_{1,h}, f_2\in \mathcal{F}_2\}$
has a covering number shrinking at a polynomial rate of $\epsilon$
so the uniform entropy integral converges for every $h\leq 1$. 
%
Therefore, according to Theorem 19.28 in \cite{van1998asymptotic}, $(I)$ converges to a Brownian bridge.

{\bf Analysis of  (II):}
Recalled that from Theorem~\ref{thm::pdf},
\begin{align*}
\hat{f}_{A,h}&(x_{> t}\mid x_{\leq t} ,t)-\bar{f}_{A,h}(x_{> t}\mid x_{\leq t} ,t)\\
&= \bar{f}_{A,h}(x_{>t}\mid x_{\leq t}, t)\times 
\sum_{s=t+1}^d \bigg\{ \frac{ \hat{g}_h(x_{\leq s}\mid T\in \assum_{ts})- \bar{g}_h (x_{\leq s}\mid T\in \assum_{ts})}{\bar g_{h}(x_{\leq s}\mid T\in \assum_{ts})}\\
&\qquad - \frac{\hat {g}_h(x_{< s}\mid T\in \assum_{ts})- \bar g_h (x_{< s}\mid T\in \assum_{ts})}{\bar g_h(x_{< s}\mid T\in \assum_{ts})} \bigg\}
+ \tilde{W}_n(x,t)\\
&=\sum_{s=t+1}^d 
\left\{\mathcal{E}_{1,s}(x,t) -\mathcal{E}_{2,s}(x,t) \right\} + \tilde{W}_n(x,t)
\end{align*}
where $\tilde{W}_n(x,t)$ is a negligible term
and $\mathcal{E}_{1,s}(x,t), \mathcal{E}_{2,s}(x,t)$ are defined in equation \eqref{eq::pf::E12}.
Because $\tilde{W}_n(x,t)$ is negligible, we ignore it in our analysis.
After ignoring it, we obtain
\begin{align*}
\Delta_{2,n}(x_{> t}\mid x_{\leq t},t) &= \hat{F}_{A,h}(x_{> t}\mid x_{r},t)- \bar{F}_{A,h}(x_{> t}\mid x_{\leq t},t)\\
& = \int_{x'_{> t} = -\infty}^{x'_{>t}=x_{> t}} \left\{\hat{f}_{A,h}(x'_{>t }\mid x_{\leq t} ,t)-\bar{f}_{A,h}(x'_{> t}\mid x_{\leq t} ,t)\right\}
dx'_{> t}\\
& = \int_{x'_{> t} = -\infty}^{x'_{> t}=x_{> t}}\sum_{s=t+1}^d \left\{\mathcal{E}_{1,s}(x',t) -\mathcal{E}_{2,s}(x',t) \right\}
dx'_{> t}\\
& = \sum_{s=t+1}^d\int_{x'_{> t} = -\infty}^{x'_{> t}=x_{> t}} \left\{\mathcal{E}_{1,s}(x',t) -\mathcal{E}_{2,s}(x',t) \right\}
dx'_{> t}
\end{align*}

Using the fact that $G(dx'_{\leq t}, t) = g(x'_{\leq t}, t)dx'_{\leq t}$ and the above expression, 
the quantity (II) can be written as
\begin{equation}
\begin{aligned}
(II) & = \int^{x_{\leq t}' = x_{\leq t}}_{x_{\leq t}' = -\infty}\sqrt{n} \Delta_{2,n}(x_{\leq t}\mid x_{r}',t) G(dx_{\leq t}', t)\\
&=\sqrt{n}\sum_{s=t+1}^d
\int^{x_{\leq t}' = x_{\leq t}}_{x_{\leq t}' = -\infty} \int_{x'_{> t} = -\infty}^{x'_{> t}=x_{> t}} \left\{\mathcal{E}_{1,s}(x',t) -\mathcal{E}_{2,s}(x',t) \right\}
dx'_{> t}G(dx_{\leq t}', t)\\
& = \sum_{s=t+1}^d \Omega_{1,s}(x,t)-\Omega_{2,s}(x,t).
\end{aligned}
\label{eq::SEP}
\end{equation}
Since the result is an additive form of each quantity involving $j$ and $s$,
we will show that 
\begin{equation}
\Omega_{j,s}(x,t) = \sqrt{n}
\int^{x_{\leq t}' = x_{\leq t}}_{x_{\leq t}' = -\infty} \int_{x'_{> t} = -\infty}^{x'_{> t}=x_{> t}}\mathcal{E}_{j,s}(x',t)
dx'_{> t}G(dx_{\leq t}', t)
\end{equation}
converges to a Brownian bridge for each $j=1,2$ and $s\in \{t+1,\cdots, d\}$.

For simplicity, we focus on $j=1$ case. 
The case of $j=2$ can be derived similarly. 
Using the fact that $G(dx_{\leq t}', t) = g(x_{\leq t}', t)dx'_{\leq t}$,
we can rewrite $\Omega_{j,s}(x)$  as
\begin{align*}
\Omega_{1,s}(x,t) & = \sqrt{n} \int_{x'=-\infty}^{x'=x}
\mathcal{E}_{1,s}(x',t) g(x_{\leq t}', t) dx'_{\leq t}\\
&= \sqrt{n} \int_{x'=-\infty}^{x'=x}
\frac{\bar{f}_{A,h}(x'_{>t}\mid x'_{\leq t}, t)g(x_{\leq t}', t)}{g(x_{\leq s}\mid T\in \assum_{ts})}\times 
\left(\hat{g}_h(x'_{\leq s}\mid T\in \assum_{ts} )-\bar{g}_h(x'_{\leq s}\mid T\in \assum_{ts} )\right)dx'\\
& = \sqrt{n}\int_{x'=-\infty}^{x'=x} \omega_{0,h}(x', s, t) \left(\hat{g}_h(x'_{\leq s}\mid T\in \assum_{ts} )-\bar{g}_h(x'_{\leq s}\mid T\in \assum_{ts} )\right)dx'\\
& = \sqrt{n}\int_{x'_{\leq s}=-\infty}^{x'_{\leq s}=x_{\leq s}} \omega_{1,h}(x_{>s}, x'_{\leq s}, s, t) \left(\hat{g}_h(x'_{\leq s}\mid T\in \assum_{ts} )-\bar{g}_h(x'_{\leq s}\mid T\in \assum_{ts} )\right)dx'_{\leq s},
\end{align*}
where 
\begin{align*}
\omega_{0,h}(x', s, t)& = \frac{\bar{f}_{A,h}(x'_{>t}\mid x'_{\leq t}, t)g(x_{\leq t}', t)}{g(x_{\leq s}\mid T\in \assum_{ts})}\\
\omega_{1,h}(x_{>s}, x'_{\leq s}, s, t) &= \int_{x'_{>s} = -\infty}^{x'_{>s} = x_{>s}}  \omega_{0,h}(x', s, t)  dx'_{>s}.
\end{align*}

For a function $f_{z}: \mathbb{R}^s \mapsto \mathbb{R}$ with index $z\in \mathbb{T}$, 
we define
a smoothed empirical process $\tilde{\mathbb{G}}_s$ 
such that 
$$
\tilde{\mathbb{G}}_s(f_z) = \sqrt{n}\int f_z(x_{\leq s}) \left(\hat{g}_h(x_{\leq s}\mid T\in \assum_{ts})-\bar{g}_h(x_{\leq s}\mid T\in\assum_{ts})\right)dx_{\leq s}.
$$
Let $\mathcal{F} = \{f_z: z\in \mathbb{T}\}$. 
By Theorem 2 and  Section 4 of \cite{gine2008uniform}, 
the smoothed empirical process defined on $\mathcal{F}$ converges to 
a Brownian bridge if the the function space $\mathcal{F}$
is Donsker (there are assumptions on the kernel functions and smoothing bandwidth;
assumption (K1-2) satisfy the assumptions).
In our case, the function space is
\begin{align*}
\mathcal{G}_{t,s,h} &= \{g_x(y): x\in\mathcal{X}\}\\
g_x(y)&=\omega_{1,h}(x_{>s}, y, s, t) I(y\leq x_{\leq s}).
\end{align*}
Namely, 
$$
\Omega_{1,s}(x) = \tilde{\mathbb{G}}_r(g_x).
$$
So we only need to show that $\mathcal{G}_{t,s,h}$
is a Donsker class for every $h\leq 1$.

To show that $\mathcal{G}_{t,s,h}$ is Donsker,
first note that any function $g_x\in \mathcal{G}_{t,s,h}$
can be written as
$$
g_x(y) = \psi_x(y)\cdot q_x(y),
$$
where $\psi_x(y) = I(y\leq x_{\leq s})$ and $q_x(y) = \omega_{1,h}(x_{>s}, y, s, t) $
so 
$$
\mathcal{G}_{t,s,h} = \mathcal{H}\times \mathcal{Q}_{s,t,h},
$$
with $\mathcal{H} = \{\psi_x(y): x\in \mathcal{X}\}$ 
is a collection of indicator functions
and $\mathcal{Q}_{s,t,h} = \{q_x(y): x\in \mathcal{X}\}$
is a collection of smooth functions with a bounded derivative (due to assumption (A1)--
note that $\omega_{1,h}$ are product of densities with integral so it has a bounded derivative).
Thus, both $\mathcal{H}$ and $\mathcal{Q}_{s,t,h}$ for every $h\leq 1$ are Donsker 
so $\mathcal{G}_{t,s,h} = \mathcal{H}\times \mathcal{Q}_{s,t,h}$
is also Donsker for every $h\leq 1$.
Thus,
we have shown that $\Omega_{1,s}(x)$
converges to
a Brownian bridge for each $s$.
The same argument works for $\Omega_{2,s}(x)$
so 
$(II)= \sum_{s=t+1}^d \Omega_{1,s}(x)-\Omega_{2,s}(x)$
converges to a Brownian bridge.

Putting the analysis of (I) and (II) together and use the fact that (III) is negligible, 
we conclude that $\sqrt{n}(\hat{F}_{A,h}(x,t) - \bar{F}_{A,h}(x,t))$ converges
to a Brownian bridge for each $t$.
\end{proof}

%
%
%

%

\begin{proof}[Proof of Theorem~\ref{thm::MC_icin}]
Because each $X^{(v)}_{i,>T_i}$ is generated independently and identically from
each other, we only need to show that they are from the distribution function 
$\hat{F}_{A,h}(x_{>T_i}\mid X_{i,\leq T_i}, t=T_i)$. 

Recall that from Algorithm \ref{alg::MC}, each $X^{(v)}_{i,>T_i}$
is created by the ordering $X^{(v)}_{i,T_i+1}, X^{(v)}_{i,T_i+2},\cdots, X^{(v)}_{i,d}$. 
Thus, we prove this by induction. 
To start with, we consider the case $s=T_i+1$ and  
we show that $X^{(v)}_{i,T_i+1}$ is from the distribution function 
$$
\hat{F}_{A,h}(x_{T_i+1}\mid X_{i,< T_i+1}, t=T_i)
$$
or its PDF $\hat{f}_{A,h}(x_{T_i+1}\mid X_{i,\leq T_i}, t=T_i)$. 

By Algorithm \ref{alg::MC}, $X^{(v)}_{i,T_i+1}$ is obtained by
first sampling an index $\ell \in \{1,\cdots, n\}$ such that $P(\ell = N\mid {\sf data}) = W_{N}(X^{(v)}_{<T_i+1};\assum_{t,T_i+1})$
and then drawing from the PDF $K_{T_i+1}(\cdot;X_{\ell, T_i+1}, h_{T_i+1})$. 
This is essentially sampling from a mixture distribution
so the PDF of $X^{(v)}_{i,T_i+1}$ is
$\sum_{\ell=1}^n W_{\ell}(X^{(v)}_{<T_i+1};\assum_{t,T_i+1})K_{T_i+1}(\cdot;X_{\ell, T_i+1}, h_{T_i+1})$,
which by equation \eqref{eq:est_monotone_donor} and \eqref{eq:gmixture}
is $\hat{f}_{A,h}(\cdot\mid X_{i,< T_i+1}, t=T_i)$. 
Thus, we have proved the case of $T_i+1$.

Now assume that it is true for all $s = T_i+1,\cdots, \tau$. 
Namely, $X^{(v)}_{i,s}$ has a PDF $
\hat{f}_{A,h}(x_{s}\mid X_{i,<s}, t=T_i)
$
so the joint PDF is $\prod_{s=T_i+1}^{\tau}\hat{f}_{A,h}(x_{s}\mid X_{i,< s}, t=T_i)$. 
We will show that it implies that $s=\tau+1$ is also true.
By the same derivation as the case of $s=T_i+1$,
one can easily see that conditioned on $X^{(v)}_{i,T_i+1},\cdots, X^{(v)}_{i,\tau}$,
$X^{(v)}_{i,\tau+1}$ is from a mixture distribution
whose PDF is $$
\sum_{\ell=1}^n W_{\ell}(X^{(v)}_{<\tau+1};\assum_{t,\tau+1})K_{\tau+1}(\cdot;X_{\ell, \tau+1}, h_{\tau+1})
= \hat{f}_{A,h}(\cdot\mid X_{i,\leq \tau}, t=T_i).
$$
Thus, the joint PDF of $X^{(v)}_{i,T_i+1},\cdots, X^{(v)}_{i,\tau+1}$
is the product
$$
\prod_{s=T_i+1}^{\tau+1}\hat{f}_{A,h}(x_{s}\mid X_{i,< s}, t=T_i)
$$
so the result holds for $s=\tau+1$ case. 
By induction, it holds for every $s=T_i+1,\cdots, d$,
which proves that 
$X^{(v)}_{i,>T_i}$
has a PDF
$$
\prod_{s=T_i+1}^{d}\hat{f}_{A,h}(x_{s}\mid X_{i,< s}, t=T_i) = \hat{f}_{A,h}(x_{>T_i}\mid X_{i,\leq T_i}, t=T_i),
$$
which is the desired result.
\end{proof}

\begin{proof}[Proof of Theorem~\ref{thm::LBT}]
With the convergence toward a Brownian bridge (Theorem~\ref{thm::uclt}),
this result follows from the Theorem of the bootstrap for delta method; see, e.g., Theorem 23.9 of \cite{van1998asymptotic}
and Theorem 3.9.11 of \cite{van1996weak}.
\end{proof}

\end{document}